\ifx\pdfoutput\undefined\else\pdfoutput=1\fi
\documentclass[9pt,a4paper,oneside,reqno]{amsart} 
\input{packages.tex}

\DeclareMathOperator{\Tr}{Tr}

\DeclareMathOperator{\E}{\mathbf{E}}
\DeclareMathOperator{\Prob}{\mathbf{P}}

\newcommand{\ii}{\mathrm{i}}

\ifdefined\C
\renewcommand{\C}{\mathbf{C}}
\else
\newcommand{\C}{\mathbf{C}}
\fi

\newcommand{\wt}{\widetilde}

\newcommand{\R}{\mathbf{R}}

\newcommand{\DD}{\mathbf{D}}

\newcommand{\cO}{\mathcal{O}}
\newcommand{\co}{{\scriptstyle\mathcal{O}}}

\newcommand{\dif}{\operatorname{d}\!{}}
\DeclarePairedDelimiter{\abs}{\lvert}{\rvert}%
\providecommand\given{}
\newcommand\SetSymbol[1][]{\nonscript\:#1\vert\allowbreak\nonscript\:\mathopen{}}
\DeclarePairedDelimiterX{\tuple}[1](){\renewcommand\given{\SetSymbol[\delimsize]}#1}
\DeclarePairedDelimiterX{\set}[1]{\{}{\}}{\renewcommand\given{\SetSymbol[\delimsize]}#1}
\DeclarePairedDelimiterX{\Set}[1]\{\}{\renewcommand\given{\SetSymbol[\delimsize]}#1}
\DeclarePairedDelimiterXPP{\landauO}[1]{\cO}(){}{#1}
\DeclarePairedDelimiterXPP{\landauo}[1]{\co}(){}{#1}
\DeclarePairedDelimiterXPP{\landauok}[1]{\co_k}(){}{#1}
\DeclarePairedDelimiterXPP{\landauOprec}[1]{\cO_\prec}(){}{#1}
\DeclarePairedDelimiterXPP{\Exp}[1]{\E}[]{}{\renewcommand\given{\SetSymbol[\delimsize]}#1}
\DeclarePairedDelimiterXPP{\condProb}[1]{\Prob}[]{}{\renewcommand\given{\SetSymbol[\delimsize]}#1}
\DeclareFontFamily{U}{mathx}{\hyphenchar\font45}
\DeclareFontShape{U}{mathx}{m}{n}{
      <5> <6> <7> <8> <9> <10>
      <10.95> <12> <14.4> <17.28> <20.74> <24.88>
      mathx10
      }{}
\DeclareSymbolFont{mathx}{U}{mathx}{m}{n}
\DeclareFontSubstitution{U}{mathx}{m}{n}
\DeclareMathAccent{\widecheck}{0}{mathx}{"71}
\usepackage[giveninits=true,url=false,doi=false,isbn=false,eprint=true,datamodel=mrnumber,sorting=nty,maxcitenames=4,maxbibnames=99,backref=false,block=space,backend=biber,bibstyle=phys]{biblatex} 
\AtEveryBibitem{\clearfield{month}}
\AtEveryCitekey{\clearfield{month}} 
\renewbibmacro{in:}{}
\ExecuteBibliographyOptions{eprint=true}
\DeclareFieldFormat[article]{title}{\emph{#1}} 
\DeclareFieldFormat{mrnumber}{\ifhyperref{\href{http://www.ams.org/mathscinet-getitem?mr=#1}{\nolinkurl{MR#1}}}{\nolinkurl{#1}}}
\DeclareFieldFormat{pmid}{\ifhyperref{\href{https://www.ncbi.nlm.nih.gov/pubmed/#1}{\nolinkurl{PMID#1}}}{\nolinkurl{#1}}}
\DeclareFieldFormat{eprint}{\ifhyperref{\href{https://arxiv.org/abs/#1}{\nolinkurl{arXiv:#1}}}{\nolinkurl{#1}}}
\renewbibmacro*{doi+eprint+url}{%
  \iftoggle{bbx:doi}{\printfield{doi}}{}
  \newunit\newblock%
  \printfield{mrnumber}%
  \newunit\newblock%
  \printfield{pmid}%
  \newunit\newblock%
  \printfield{eprint}%
  \iftoggle{bbx:url}{\usebibmacro{url+urldate}}{}}
\bibliography{references}

\newtheorem{theorem}{Theorem}
\newtheorem{lemma}[theorem]{Lemma}
\newtheorem{remark}[theorem]{Remark}
\newtheorem{corollary}[theorem]{Corollary}
\numberwithin{theorem}{section}

\newcommand{\f}{\mathfrak{f}}
\newcommand{\g}{\mathfrak{g}}

\date{\today}
\author{Giorgio Cipolloni \and L\'aszl\'o Erd\H{o}s}
\address{IST Austria, Am Campus 1, 3400 Klosterneuburg, Austria}
\author{Dominik Schr\"oder\(^{\ast}\)}
\address{Institute for Theoretical Studies, ETH Zurich, Clausiusstr.\ 47, 8092 Zurich, Switzerland}
\email{giorgio.cipolloni@ist.ac.at} 
\email{lerdos@ist.ac.at}
\email{dschroeder@ethz.ch}
\thanks{\(^\ast\)Supported by Dr.\ Max R\"ossler, the Walter Haefner Foundation and the ETH Z\"urich Foundation}
\subjclass[2010]{60B20, 15B52, 68W40} 
\keywords{Supersymmetric formalism, Circular Law, Superbosonization formula}
\title[Density of small singular values of the shifted real Ginibre ensemble]{Density of small singular values of the shifted real Ginibre ensemble}
\date{\today}
\begin{document}   
\thispagestyle{empty}  

\begin{abstract}  
    We derive a precise asymptotic formula  for the density of the small singular values
    of the real Ginibre matrix ensemble shifted by a complex parameter $z$
    as the  dimension tends to infinity.
    For $z$ away from the real axis the formula coincides with that for the
    complex Ginibre ensemble we derived earlier in~\cite{Cipolloni2020}. 
    On the level of the one-point function of the low lying singular values we thus confirm the
    transition from  real to complex Ginibre ensembles
    as the shift parameter $z$ becomes genuinely complex; the analogous 
    phenomenon has been well known for eigenvalues. 
    We use the superbosonization formula~\cite{MR2430637} in a regime where 
    the main contribution comes from a three dimensional  saddle manifold.
\end{abstract}

\maketitle

\section{Introduction}

The universality paradigm in random matrix theory asserts that the
local eigenvalue statistics of large random matrices  depend only
on the basic symmetry class of the ensemble. 
In the Hermitian case, this dependence is usually  investigated for the $k$-point
functions  starting from  $k\ge 2$, while the one-point function is largely insensitive to the 
symmetry class apart from finite-size correction terms (see, e.g.~\cite{MR3137043} for GOE/GUE). 
For non-Hermitian matrices, however, the real axis plays an interesting distinguishing 
role between the real and complex ensembles already on the level of the one-point function.
In the simplest Gaussian case this phenomenon has been well known for the eigenvalues; in this paper we 
investigate it for singular values where no explicit formulas are available.

We consider the \emph{real} or \emph{complex Ginibre ensemble}~\cite{MR173726}, i.e.\ 
large $N\times N$ random matrices $X$ with independent identically distributed (i.i.d) 
real or complex Gaussian
entries $x_{ab}$.  The customary normalization, $\E x_{ab}=0$, $\E |x_{ab}|^2= N^{-1}$,  guarantees
that the density of eigenvalues of $X$ converges to the uniform measure on the complex unit
disk $\set[\big]{ z\given \abs{z}\le 1}$, known as the \emph{circular law}, and that the spectral radius  of $X$
converges to 1 with very  high  probability (these results also hold for general non-Gaussian matrix elements, see
e.g.~\cite{MR773436,MR1428519,MR2409368,MR866352,MR863545,MR3813992,2012.05602}). 

While the distribution of the complex Ginibre eigenvalues is clearly rotationally invariant,
the real axis plays a special role for the real Ginibre ensemble, in particular 
there are typically  $\sim \sqrt{N}$ real eigenvalues~\cite{MR1231689} (see also the exact formula for having precisely  $k$ real eigenvalues
in~\cite{MR2185860}). 
In fact, all correlation functions
of the Ginibre eigenvalues are explicitly known see~\cite{MR173726} and~\cite{MR0220494} for  the simpler complex case, and~\cite{MR1121461,MR1437734,MR2530159,17930739} for the more involved real case.
The precise formulas  reveal a remarkable phenomenon~\cite[Theorem 11]{MR2530159}: the local  \emph{eigenvalue} statistics for real Ginibre matrices coincide with those for complex Ginibre matrices anywhere in the spectrum away from the real axis (see also~\cite{MR4047447}).

To what extent does this phenomenon hold for 
low lying \emph{singular values} 
of $X$ and their shifted version $X-z$ with a complex parameter $z$? 
While singular values may behave very differently than eigenvalues, intuitively
the very small singular values of $X-z$ are still  related to the eigenvalues of $X$ near $z$, 
since $z$ is an eigenvalue of $X$ if and only if $X-z$ has a zero singular value.
Hence we expect that these small singular values  of $X-z$
for $z$ away  from the real axis behave in the same way for real and complex Ginibre matrices.   
 This was recently proven in~\cite[Theorem 2.8]{MR4235475} for all $k$-point correlation functions and even for 
any i.i.d.\ (i.e.\ not necessarily Gaussian)
distributed matrix elements but only in the regime $|\Im z|\sim 1$. 
In this paper  we
prove that this phenomenon holds down to very close to the imaginary axis, $|\Im z|\gg N^{-1/2}$, 
on the level of the density (or one-point
function) of the singular values  using supersymmetric (SUSY) techniques. 
In our related paper~\cite{2105.13719} we explore  the power of  this  approach 
with applications to numerical analysis
by establishing new bounds on the eigenvector condition number
and on the eigenvector overlaps~\cite{PhysRevLett.81.3367,MR3851824,MR4095019}.

More precisely, we find that in the large $N$ limit
the density of the low lying singular values of $X-z$ for a real Ginibre matrix
coincides  with that of the complex Ginibre matrix $X$
as long as $|\Im z|\gg N^{-1/2}$, while it is different for $|\Im z|\sim N^{-1/2}$, c.f.\ Figure~\ref{hist figure}. 
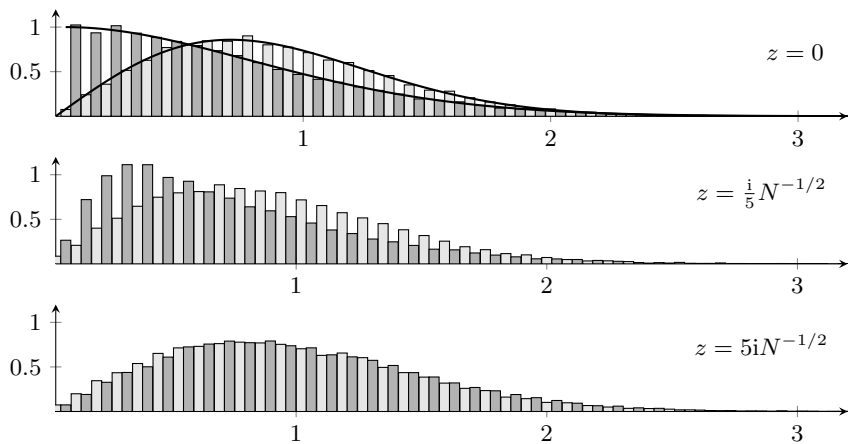
\begin{figure}[htbp]
    \centering
    \begin{tikzpicture}
        \begin{axis}[width=12cm,height=3cm,
            ymin=0,xmax=3.2,ymax=1.2, ytick={0,0.5,1}, xtick={0,1,2,3},
            axis lines=center, 
            ]
            \addplot[ybar,bar width=.04,fill=black!10,bar shift=0.0] table[col sep=comma,x index=0,y index=1] {hists.csv};
            \addplot[ybar,bar width=.04,fill=black!30,bar shift=0.04] table[col sep=comma,x index=0,y index=2] {hists.csv} node[above,yshift=2em,xshift=-10em] {$z=0$};
            \addplot [thick, domain=0:3.2, samples=201] {2*x*exp(-x^2)};
            \addplot [thick, domain=0.04:3.2, samples=201] {(1+x-.04)*exp(-(x-.04)^2/2-(x-.04))};
        \end{axis}
    \end{tikzpicture}
    \begin{tikzpicture} 
        \begin{axis}[width=12cm,height=3cm,
            ymin=0,xmax=3.2,ymax=1.2, ytick={0,0.5,1}, xtick={0,1,2,3},
            axis lines=center, 
            ]
            \addplot[ybar,bar width=.04,fill=black!10,bar shift=0.0] table[col sep=comma,x index=0,y index=3] {hists.csv};
            \addplot[ybar,bar width=.04,fill=black!30,bar shift=0.04] table[col sep=comma,x index=0,y index=4] {hists.csv} node[above,yshift=2em,xshift=-11.5em] {$z=\frac{\ii}{5}N^{-1/2}$}; 
        \end{axis}
    \end{tikzpicture}
    \begin{tikzpicture}
        \begin{axis}[width=12cm,height=3cm,
            ymin=0,xmax=3.2,ymax=1.2, ytick={0,0.5,1}, xtick={0,1,2,3},
            axis lines=center, 
            ]
            \addplot[ybar,bar width=.04,fill=black!10,bar shift=0.0] table[col sep=comma,x index=0,y index=5] {hists.csv};
            \addplot[ybar,bar width=.04,fill=black!30,bar shift=0.04] table[col sep=comma,x index=0,y index=6] {hists.csv} node[above,yshift=2em,xshift=-11.5em] {$z=5\ii N^{-1/2}$};
        \end{axis}
    \end{tikzpicture}
    \caption{Histogram of rescaled smallest singular value \(N\sigma_{\min }(X-z)\)  of \(X-z\) in the real (dark grey) and complex (light grey case). For \(z=0\) the asymptotic densities \(2xe^{-x^2}\) and \((1+x)e^{-x^2/2-x}\) have been computed by Edelman~\cite{MR964668}.}\label{hist figure} 
\end{figure}
This  indicates  a  transition in the  local singular value statistics of $X-z$ 
from real to complex as $|\Im z|$ increases beyond $N^{-1/2}$, similarly\footnote{We remark that~\cite[Theorem 11]{MR2530159} did not explicitly state that the transition takes place for $|\Im z|\gg N^{-1/2}$,
but it can be concluded from its proof.}
to the local eigenvalue
statistics  of $X$.   

Technically,
we express the averaged trace
of the resolvent of $(X-z)(X-z)^*$
in terms of contour integrals using the superbosonization 
formula~\cite{MR2430637} and perform the large $N$ limit. This  analysis
has been carried out for the complex case in~\cite{Cipolloni2020}, now we handle 
the considerably  more involved  real case. The main additional complication stems from 
the structure of the superbosonization formula: the  contour integration 
in the real case involves three integration variables, two of them are highly convolved
and their contours cannot be deformed independently; while 
the complex case has only two variables and the phase function is decoupled in them. 
The entire analysis is done at the bottom of the spectrum of $(X-z)(X-z)^*$, at a distance comparable with the 
(square of the) local spacing of the  singular values, hence our result directly gives precise information on 
individual singular values. 
In this critical regime the answer does not come simply  from a saddle point, but from a genuine three-fold
integral even after the $N\to\infty$ limit is taken.  With a careful choice of the interdependent
deformations of the contours we achieve the negative sign in the real part of the phase function
hence we can rigorously estimate the physically irrelevant highly oscillatory integration regimes. Note that the mere  existence of such deformation is not guaranteed by any physical principle,  let alone finding them explicitly
-- this is what we achieve here.
A further feature of our work is that we can 
handle  the bulk, $|z|< 1$, as well as
the edge regime, $|z|\approx 1$, where the scaling changes from $N^{-1}$ to $N^{-3/4}$.

\subsection*{Notations and conventions} 
For positive quantities \(f,g\) we write \(f\lesssim g\) and \(f\sim g\) if \(f \le C g\) or \(c g\le f\le Cg\), respectively, for some constants \(c,C>0\) which are independent of the basic parameters of the problem $N,\lambda,\widetilde{\eta},\widetilde{\delta}$ in~\eqref{eq:resvaretalam}. 
For any two positive, possibly $N$-dependent, quantities \(f,g\) we write \(f\ll g\) 
to denote that $f\lesssim N^{-\epsilon} g$, for some small $\epsilon>0$
(however this convention will be locally altered within the proof of Lemma~\ref{lem:smallatau}).
We abbreviate the minimum and maximum of real numbers by \(a\wedge b:=\min\set{a,b}\) 
and \(a\vee b:=\max\set{a,b}\).  

\section{Main results}
We consider the ensemble $Y^z:=(X-z)(X-z)^*$ with $X\in  \R ^{N\times N}$ being a \emph{real Ginibre} matrix, i.e.\ its entries $x_{ab}$ are such that $\sqrt{N}x_{ab}$ are i.i.d.\ real standard Gaussian random variables, and $z\in\C $ is a fixed complex parameter such that $|z|\le 1$. 
We compute the large $N$ asymptotics for the spectral one-point function 
$\E \Tr (Y^z-w)^{-1}$, with $w=E+\ii 0$.
The energy $E$ is chosen to be comparable with the local eigenvalue spacing of $Y^z$,
i.e.\ we study the small eigenvalues of $Y^z$. 
The imaginary part of $\E \Tr (Y^z-w)^{-1}$ is the density of states at the energy $E$. In particular, we focus on the transitional regime $|\Im z|\sim N^{-1/2}$ proving that $\E \Tr (Y^z-w)^{-1}$ exhibits a 
one-parameter family of behaviours 
depending on $N^{1/2}|\Im z|$. Additionally, we prove that  $\E \Tr (Y^z-w)^{-1}$ behaves as in the case of complex Ginibre matrix $X$ for $|\Im z|\gg N^{-1/2}$. 

In order to study the transitional regime $|\Im z|\sim N^{-1/2}$, we introduce the rescaled variables
\begin{equation}
    \label{eq:resvaretalam}
    \lambda:= N^{3/2}(1\vee \widetilde{\delta})E, \qquad \widetilde{\eta}:= N^{1/2}\Im z, \qquad \widetilde{\delta}:=N^{1/2}\delta,
\end{equation}
with $\delta:=1-|z|^2$. By~\cite[Section 5]{1907.13631} it is easy to see  that the level spacing of the eigenvalues 
of $Y^z$ close to zero  is of order 
\[
c(N,\widetilde{\delta}):=N^{-3/2}\cdot (1\wedge \widetilde{\delta}^{-1}),
\]
i.e., 
 for $|z|<1$ is given by $N^{-2}\delta^{-1}$ and for $|z|=1$ by $N^{-3/2}$,
which explains the scaling of $\lambda$. The unusual  $N^{-3/2}$ scaling 
in the edge regime $|z|=1$ originates from the fact that
the density of eigenvalues of the  Hermitized matrix
\begin{equation}\label{eq:herm}
    H^z: = \begin{pmatrix}  0 &  X-z \cr  (X-z)^* & 0 \end{pmatrix}
\end{equation}
features  a cubic cusp singularity that has a natural  eigenvalue spacing $N^{-3/4}$.

We now state the main  technical result  on the large $N$ asymptotics   of the one-point function. 
The main conclusion of the paper will be given as its Corollary~\ref{theo:1realcompl} afterwards. Note that the formulas~\eqref{eq:explt} are considerably simplified when $\widetilde\delta=0$, i.e.\ 
$|z|=1$, in particular, the spectral scaling factor becomes $c(N, \widetilde\delta=0)=N^{-3/2}$.
\begin{theorem}\label{theo:1pointreal}
    Let $C_0,C_1>0$ sufficiently large constants. For any $C_0^{-1}\le\lambda \le C_0$, for $\widetilde{\eta}=0$ or $C_0^{-1}\le|\widetilde{\eta}|\le C_0$, and for $\widetilde{\delta}=0$ or $\widetilde{\delta}\ge C_1$ it holds
    \begin{equation}
        \label{eq:largenas}
        \E  \frac{1}{N}\mathrm{Tr}(Y^z-\lambda c(N,\widetilde{\delta})-\ii 0)^{-1}=N^{1/2} I^{(\R )}(\lambda,\widetilde{\eta},\widetilde{\delta})+\mathcal{O}\left(1+\widetilde{\delta}\right),
    \end{equation}
    where
    \begin{equation}
        \label{eq:exp1poi}
        I^{(\R )}(\lambda,\widetilde{\eta},\widetilde{\delta}):=\frac{1}{4\pi \ii} \oint_{\Gamma} \dif \xi \int_\Omega \dif \tau  \int_{\Lambda} \dif a\frac{\xi^2a}{\tau^{1/2}} e^{\f(\xi,\lambda,\widetilde{\delta})-\g(a,\tau,\widetilde{\eta},\lambda,\widetilde{\delta})} G(a,\tau,\xi,\widetilde{\eta},\widetilde{\delta}),
    \end{equation}
    with $\Gamma$ any contour around $0$ in a counter-clockwise direction, $\Lambda$ any contour going out from $0$ in the direction of \(e^{\ii\pi/6}\) for a while  and then
     going to infinity in the direction $e^{3\pi\ii/5}$, and $\Omega$ any contour in the fourth quadrant going out from zero in the direction $e^{-\ii\pi/3}$ and ending in one with an angle $e^{\ii\pi/3}$, see Figure~\ref{contours}. Here
    \begin{equation}
        \label{eq:explt}
        \begin{split}
            \f(\xi,\lambda,\widetilde{\delta})&:= -(1\wedge \widetilde\delta^{-1})\lambda\xi+\frac{1}{2\xi^2}+\frac{\widetilde{\delta}}{\xi}, %
             \\
            \g(a,\tau,\widetilde{\eta},\lambda,\widetilde{\delta})&:= -(1\wedge \widetilde\delta^{-1})\lambda a+\frac{2\widetilde{\eta}^2(1-\tau)}{\tau}+\frac{2-\tau}{2a^2\tau^2}+\frac{\widetilde{\delta}}{a\tau}, \\
            G(a,\tau,\xi,\widetilde{\eta},\widetilde{\delta})&:= \frac{1}{a^2\tau\xi^6}+\frac{2}{a^3\tau^2\xi^5}+\frac{4-\tau}{a^4\tau^3\xi^4}+\frac{2}{a^5\tau^3\xi^3}+\frac{1}{a^6\tau^3\xi^2}+\frac{1}{a^2\tau\xi^4}+\frac{2}{a^3\tau^2\xi^3} \\
            &\quad+\frac{1}{a^4\tau^2\xi^2} +\frac{2\widetilde{\delta}}{a^2\tau\xi^5}+\frac{4\widetilde{\delta}}{a^3\tau^2\xi^4}+\frac{4\widetilde{\delta}}{a^4\tau^3\xi^3}+\frac{2\widetilde{\delta}}{a^5\tau^3\xi^2}+\frac{4\widetilde{\eta}^2}{a^2\tau^2\xi^4}+\frac{4\widetilde{\eta}^2}{a^3\tau^2\xi^3} \\
            &\quad+\frac{4\widetilde{\eta}^2}{a^4\tau^3\xi^2}+\frac{4\widetilde{\eta}^2\widetilde{\delta}}{a^2\tau^2\xi^3}+\frac{4\widetilde{\eta}^2\widetilde{\delta}}{a^3\tau^2\xi^2}.
        \end{split}
    \end{equation}
    The implicit constant in $\mathcal{O}(\cdot)$ depends on $C_0$. Moreover, the integral $I^{(\R )}(\lambda,\widetilde{\eta},\widetilde{\delta})$ is absolutely convergent and is bounded by
    $C(1+\widetilde{\delta})$ with
    a constant that depends only on $C_0$ and $C_1$.
\end{theorem}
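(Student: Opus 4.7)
My plan is to follow the supersymmetric (SUSY) strategy employed in the complex case~\cite{Cipolloni2020}, now carried out for the real symmetry class where the algebraic and analytic structure are both more delicate. First I would represent $\frac{1}{N}\E \Tr(Y^z-w)^{-1}$ as $-\frac{1}{N}\partial_w \E \log\det(Y^z-w)$ and encode the log-determinant via a ratio of bosonic and fermionic Gaussian integrals, with a regulator that can be removed at the end. Because $X$ is real, the Hermitization~\eqref{eq:herm} must be used to bring the average over $X$ into a form where the real Gaussian entries can be integrated out explicitly; this produces a quartic action in the super-vectors with the larger symmetry group of the real class — the key structural difference from the complex case. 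A Hubbard-Stratonovich decoupling then replaces this by a Gaussian in an auxiliary supermatrix $Q$ of $N$-independent size.

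Next I would apply the real-class version of the superbosonization formula~\cite{MR2430637} to reduce the $Q$-integral to an integral over its radial coordinates. In the real case this yields \emph{three} such coordinates — versus two in the complex case — namely a circular variable $\xi$ near $0$, a non-compact radial variable $a$, and a compact fermionic-type variable $\tau$. Combining the Berezinian/Jacobian of the reduction with the expansion produced by the $w$-derivative yields the polynomial pre-factor $G$ of~\eqref{eq:explt}. After rescaling $\xi,a \mapsto N^{-1/2}\xi, N^{-1/2}a$ and $\tau$ by the appropriate factor dictated by the spacing $c(N,\wt\delta)$, the effective action in the exponent assumes the limiting form $\f-\g$ as in~\eqref{eq:explt}, with corrections of order $N^{-1/2}$. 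The $\xi^{-2}$ and $(2-\tau)/(2a^2\tau^2)$ pieces encode the local saddle at the spectral origin, while the $\lambda$-linear, $\wt\eta$, and $\wt\delta$ contributions come respectively from the spectral shift, the imaginary part of $z$, and the edge distance $1-\abs{z}^2$.

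The regime~\eqref{eq:resvaretalam} is precisely where the integral concentrates on the scale $\xi,a,\tau \sim 1$, so the rescaled integrand converges to~\eqref{eq:exp1poi} pointwise. The prefactor $N^{1/2}$ in~\eqref{eq:largenas} comes from the Jacobian of the rescaling combined with the $1/N$ in front of the trace, and the $\cO(1+\wt\delta)$ error arises from the sub-leading Taylor terms in the effective action and from truncating contours where $\Re(\f-\g) < 0$. The $(1+\wt\delta)$ factor tracks the $\wt\delta/\xi$ and $\wt\delta/(a\tau)$ terms, and uniformity across $\wt\delta = 0$ (edge) and $\wt\delta \ge C_1$ (bulk) must be verified throughout.

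The main obstacle is the explicit construction of the contours $\Gamma, \Lambda, \Omega$. Since $\f$ depends only on $\xi$ while $\g$ couples $a$ and $\tau$ through $(2-\tau)/(2a^2\tau^2)$ and $\wt\delta/(a\tau)$, the $\xi$ contour can still be treated essentially independently, but $\Lambda$ (for $a$) and $\Omega$ (for $\tau$) must be deformed \emph{jointly}. I would construct a coupled deformation from the natural superbosonization cycles to the contours described in the statement — emerging from $0$ in directions $e^{\ii\pi/6}$ and $e^{-\ii\pi/3}$, escaping to infinity along $e^{3\pi\ii/5}$, and closing the $\tau$-loop at angle $e^{\ii\pi/3}$ — chosen so that $\Re(\f-\g) \le 0$ everywhere on the joint cycle, with quantitative decay at infinity and integrability near $\tau \to 0$ despite the factor $\tau^{-1/2}$. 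Absolute convergence of~\eqref{eq:exp1poi} and the bound $\abs{I^{(\R)}} \le C(1+\wt\delta)$ then follow by splitting into a bounded core (where the integrand is uniformly bounded by $C(1+\wt\delta)$) and tails controlled by the exponential decay. Verifying that no residues are crossed in this coupled deformation, and that the decay is quantitative uniformly in $(\lambda,\wt\eta,\wt\delta)$, is the heart of the matter and the central novelty of the real case.
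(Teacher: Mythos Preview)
Your overall strategy matches the paper's: start from the real-class superbosonization identity (which the paper simply quotes from~\cite[Eqs.~(34)--(37)]{Cipolloni2020} rather than re-deriving), deform the three contours so that $\Re\f<0$ and $\Re\g>0$, excise the small-$|\xi|$, small-$|a\tau|$, small-$|\tau|$ regimes where the Taylor expansion fails, expand, and rescale $a,\xi\mapsto N^{-1/2}a,N^{-1/2}\xi$. Two points of your plan, however, diverge from what actually works.

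First, $\tau$ is \emph{not} rescaled. It lives on a compact contour $\Omega$ joining $0$ to $1$ through the lower half-plane, and all $N$-dependence in the phase $g$ is absorbed by the $a$ and $\xi$ rescalings alone; your proposed ``$\tau$ by the appropriate factor dictated by the spacing'' would spoil the structure of $\g$.

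Second, and more substantively, the ``coupled deformation'' you allude to is implemented in the paper not by bending $\Lambda$ and $\Omega$ together, but by fixing a single shape $\Lambda$ for the $a$-contour and rescaling it as $r\Lambda$ with a \emph{$\tau$-dependent} factor
\[
r=r(\wt\eta,E,\tau)=E^{-1/3}\Bigl(\tfrac12+\tfrac12(E\vee|\tau|)^{-2/3}+\wt\eta^2|\tau|^{-1}\Bigr).
\]
This is the device that lets a single contour shape avoid the moving components of $\{\Re g<0\}$ across all regimes of $|\tau|$ (the geometry changes qualitatively at $|\tau|\sim E$ and at $|1-\tau|\eta^2\sim E^{2/3}$; see the paper's Lemmas~\ref{lemma lower bound g}--\ref{lem:dec}, which give a twelve-case lower bound on $\Re g$). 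Your plan names the right difficulty but not this mechanism; without it, finding a fixed $a$-contour that works uniformly in $\tau\in\Omega$ is the step most likely to stall.
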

\begin{figure}[htbp]
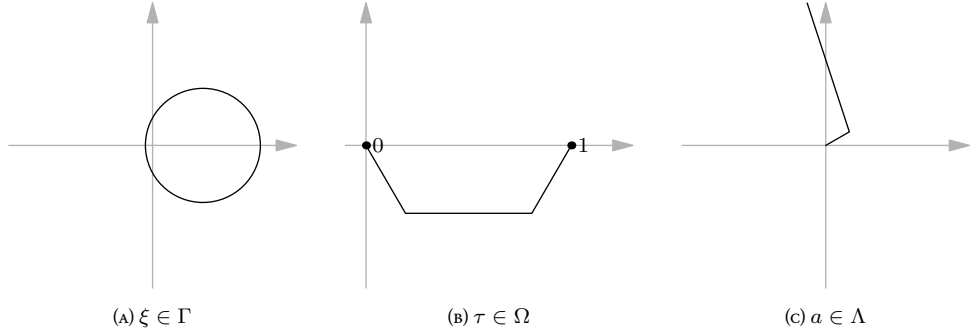

    \centering
    \begin{subfigure}[b]{0.3\textwidth}
        \centering
        \begin{asy}
            draw((-10,0)--(10,0),gray(.7),EndArrow);
            draw((0,-10)--(0,10),gray(.7),EndArrow);
            draw(circle((3.5,0),4));
        \end{asy}

        \caption{$\xi\in\Gamma$}\label{fig:y equals x}
    \end{subfigure}\hfill
    \begin{subfigure}[b]{0.3\textwidth}
        \centering
        \begin{asy}
            draw((-.1,0)--(1.3,0),gray(.7),EndArrow);
            draw((0,-.7)--(0,.7),gray(.7),EndArrow);
            draw((0,0)--(1/sqrt(27),-1/3)--(1-1/sqrt(27),-1/3)--(1,0));
            dot("$0$",(0,0));
            dot("$1$",(1,0));
        \end{asy}

        \caption{$\tau\in\Omega$}\label{fig:y equals x}
    \end{subfigure}\hfill
    \begin{subfigure}[b]{0.3\textwidth}
        \centering
        \begin{asy}
            draw((-4,0)--(4,0),gray(.7),EndArrow);
            draw((0,-4)--(0,4),gray(.7),EndArrow);
            draw((0,0)--aC(1,1)--aC(1,6));
        \end{asy}

        \caption{$a\in\Lambda$}\label{fig:y equals x}
    \end{subfigure}
    \caption{Depiction of the chosen contours}\label{contours}
\end{figure}
In Corollary~\ref{theo:1realcompl} below 
we study the behaviour of $I^{(\R )}(\lambda,\widetilde{\eta},\widetilde{\delta})$ in the large $|\widetilde{\eta}|$ regime and
we show that, in the large  $|\widetilde{\eta}|$ limit,
$I^{(\R )}(\lambda,\widetilde{\eta},\widetilde{\delta})$ agrees with 
the limiting one-point function $I^{(\C )}(\lambda,\widetilde{\delta})$
of the complex Ginibre ensemble. We recall from~\cite[Eq. (13a)]{Cipolloni2020}
that the limit analogous to~\eqref{eq:largenas} for the complex case is given by
\begin{equation}
    \label{eq:IC}
    I^{(\C )}(\lambda,\widetilde{\delta}):=
    \frac{1}{2\pi\ii} \int\dif x\oint\dif y\;  e^{h(y)-h(x)} H(x,y)
\end{equation}
with
\begin{equation}
    \label{eq:newfunonreal}
    H(x,y):={}\frac{1}{x^3}+\frac{1}{x^2 y}+\frac{1}{xy^2}+\frac{\widetilde{\delta}}{xy}+\frac{\widetilde{\delta}}{x^2},\quad h(x):= -(1\wedge \widetilde\delta^{-1}) \lambda x+ \frac{\widetilde{\delta}}{x}+\frac{1}{2x^2}.
\end{equation}
The $x$-integration is over any contour from $0$ to $e^{3\ii \pi/4}\infty$, going out from $0$ in the direction of the positive real axis, and the $y$-integration is over any contour around $0$ in a counter-clockwise direction.
It is easy to see that along such contours the integral is absolutely convergent.
Note that the rhs.\ of~\eqref{eq:IC}  %
exactly agrees with~\cite[Eqs. (13a)--(13b)]{Cipolloni2020} after the change of variables $\widetilde{z}_*x\to x$ and $\widetilde{z}_*y\to y$, using the notation therein.

\begin{corollary}\label{theo:1realcompl}
    Let $I^{(\R )}(\lambda,\widetilde{\eta},\widetilde{\delta})$ be defined as in~\eqref{eq:exp1poi}, then it holds
    \begin{equation}
        \label{eq:largeetalim}
        \lim_{|\widetilde{\eta}|\to +\infty}I^{(\R )}(\lambda,\widetilde{\eta},\widetilde{\delta})
        =I^{(\C )}(\lambda,\widetilde{\delta}),
    \end{equation}
    for any fixed $\lambda\in \R _+$ and $\widetilde{\delta}=0$ or $\widetilde{\delta}\ge C_1$.
\end{corollary}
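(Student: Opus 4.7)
The plan is to perform a Laplace-type asymptotic analysis of the $\tau$-integral in~\eqref{eq:exp1poi} as $|\widetilde{\eta}|\to\infty$, exploiting the fact that the only $\widetilde{\eta}$-dependent term in $\g$ is $2\widetilde{\eta}^2(1-\tau)/\tau$, which concentrates the integrand at the endpoint $\tau=1$ of the contour $\Omega$. First, I would verify that $\Re((1-\tau)/\tau)$ is strictly positive and bounded below by a positive constant on $\Omega\setminus\{|\tau-1|<\epsilon\}$ for every fixed $\epsilon>0$. A direct computation on each of the three linear segments of $\Omega$ shows this: for instance on the final segment one can write $1-\tau=re^{\ii\pi/3}$ with $r\in[0,2/\sqrt{27}]$, giving $\Re((1-\tau)/\tau)=r(1/2-r)/(1-r+r^2)>0$ for $r>0$. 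Hence the contribution to $I^{(\R )}$ from the region $|\tau-1|>\epsilon$ decays like $e^{-c\widetilde{\eta}^2}$ and is negligible.

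Near $\tau=1$, I would substitute $1-\tau=te^{\ii\pi/3}$ to match the prescribed tangent direction and expand
\[
    \frac{1-\tau}{\tau}=te^{\ii\pi/3}+\mathcal O(t^2),\quad \frac{1}{\tau^{1/2}}=1+\mathcal O(t),\quad \g(a,\tau,\widetilde{\eta},\lambda,\widetilde{\delta})-h(a)=2\widetilde{\eta}^2 te^{\ii\pi/3}+\mathcal O(t),
\]
where I use the key identifications $\g(a,1,\widetilde{\eta},\lambda,\widetilde{\delta})=h(a)$ and $\f(\xi,\lambda,\widetilde{\delta})=h(\xi)$ with $h$ as in~\eqref{eq:newfunonreal}. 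Rescaling $t=s/(2\widetilde{\eta}^2)$ and using the elementary identity $\int_0^\infty e^{\ii\pi/3}e^{-se^{\ii\pi/3}}ds=1$ then gives
\[
    \int_\Omega d\tau\,\frac{F(a,\tau,\xi)}{\tau^{1/2}}\,e^{-\g(a,\tau,\widetilde{\eta},\lambda,\widetilde{\delta})}=\frac{F(a,1,\xi)\,e^{-h(a)}}{2\widetilde{\eta}^2}\bigl(1+\mathcal O(\widetilde{\eta}^{-2})\bigr)
\]
for any $F$ polynomial in its arguments. Writing $G=G_0+\widetilde{\eta}^2G_1$ where $G_0$ collects the $\widetilde{\eta}$-independent monomials of~\eqref{eq:exp1poi} and $G_1$ the $\widetilde{\eta}^2$-coefficients, only $G_1$ survives the $1/(2\widetilde{\eta}^2)$ factor in the limit; inspection of~\eqref{eq:exp1poi} and~\eqref{eq:newfunonreal} yields the algebraic identity
\[
    \xi^2 a\cdot G_1(a,1,\xi,\widetilde{\delta})/2=2H(a,\xi),
\]
term by term, including the $\widetilde{\delta}$-dependent monomials.

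Assembling these pieces produces
\[
    \lim_{|\widetilde{\eta}|\to\infty}I^{(\R )}(\lambda,\widetilde{\eta},\widetilde{\delta})=\frac{1}{2\pi\ii}\oint_\Gamma d\xi\int_\Lambda da\,H(a,\xi)\,e^{h(\xi)-h(a)},
\]
which matches $I^{(\C )}(\lambda,\widetilde{\delta})$ after renaming $(a,\xi)\mapsto(x,y)$ and deforming the $a$-contour: by Cauchy's theorem the ray direction at infinity can be rotated from $e^{3\ii\pi/5}$ to $e^{3\ii\pi/4}$ through the sector where $\Re(-\lambda a)>0$, and the initial direction at the origin can be rotated from $e^{\ii\pi/6}$ to the positive real axis through the sector where $\Re(1/(2a^2))>0$, so that the closing arcs at $\infty$ and at $0$ contribute vanishingly. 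The main technical obstacle will be to upgrade the pointwise Laplace expansion to one uniform in $(a,\xi)$: the $\mathcal O(\widetilde{\eta}^{-2})$ remainder must admit an integrable dominant on $\Gamma\times\Lambda$. This follows from the same contour-decay estimates underlying the absolute convergence statement in Theorem~\ref{theo:1pointreal}—specifically the fact that $\Re(h(\xi)-h(a))\to-\infty$ superexponentially at the endpoints of $\Gamma$ (near $\xi=0$) and $\Lambda$ (near $a=0$ and at $\infty$), which dominates any polynomial growth from $G$ and from the Taylor remainders in the phase.
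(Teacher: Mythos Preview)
Your proposal is correct and follows essentially the same route as the paper: Laplace asymptotics of the $\tau$-integral at the endpoint $\tau=1$, the emergence of the factor $1/(2\widetilde{\eta}^2)$, survival of only the $\widetilde{\eta}^2$-terms in $G$, and the algebraic identification $\xi^2 a\,G_1(a,1,\xi,\widetilde{\delta})/2=2H(a,\xi)$. The paper makes your ``main technical obstacle'' (uniformity in $a$) concrete by first excising the regions $|a|\le|\widetilde{\eta}|^{-1/2}$ and $|\tau|\le C|\widetilde{\eta}|^{-1/2}$ via an exponential tail bound and only then proving the Laplace expansion uniformly on the complement, which is precisely the mechanism your last paragraph gestures at.
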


\begin{remark}
    From our analysis in  Section~\ref{sec:tildlim} (see~\eqref{eq:finetaco} later)
    it actually follows that $I^{(\R )}(\lambda,\widetilde{\eta},\widetilde{\delta})$ converges to $I^{(\C )}(\lambda,\widetilde{\delta})$ with a rate $|\widetilde{\eta}|^{-1}$. Similarly to Theorem~\ref{theo:1pointreal}, the convergence in~\eqref{eq:largeetalim} is uniform 
    in the entire range $\lambda\in [C_0^{-1}, C_0]$ and  $\widetilde{\delta}\in\{0\}\cup [ C_1,\infty)$ of the  other two parameters.
\end{remark}

\begin{remark}
    The limiting statement~\eqref{eq:largeetalim} follows by taking the $\widetilde{\eta}$
    limit within the formula~\eqref{eq:exp1poi}, i.e.\  %
    after the $N\to\infty$ limit is taken. However, we believe that in the regime $|\widetilde{\eta}|\ge C$, using a computation similar to the ones
    in Section~\ref{sec:tildlim} and to the bound~\cite[Lemmas 6.2-6.3-6.4]{Cipolloni2020}, but this time on the contours $\Lambda$, $\Omega$, one may prove the following stronger result:
    \begin{equation}
        \label{eq:onepointestreal}
        \begin{split}
            \E\frac{1}{N}\Tr(Y^z-\lambda\cdot c(N,\widetilde\delta)-\ii0)^{-1}&= N^{1/2}
            I^{(\C )}(\lambda,\widetilde{\delta}) 
            + \mathcal{O}\left(\left[(1\vee \widetilde{\delta})+\frac{1}{|\widetilde{\eta}|}\right] \bigl(1+\abs{\log \lambda}\bigr)\right).
        \end{split}
    \end{equation}
\end{remark}

\section{Derivation of the 1-point function}\label{sec:1p1pf}
Supersymmetric methods, especially the superbosonization formula (see e.g.~\cite{MR2430637}), provide an explicit formula for $\E \Tr [Y^z-w]^{-1}$. This was derived in~\cite[Eqs. (34)-(37)]{Cipolloni2020}, and with the choice $w=E+\ii \epsilon$ with $0<\epsilon\ll E\ll 1$, we have that
\begin{equation}
    \label{realsusyexplAAr}
    \E \Tr   [Y^z-w]^{-1}=\frac{N}{4\pi \ii} \oint \dif \xi \int_0^{\ii \infty} \dif a\int_0^1 \dif \tau \frac{\xi^2a}{\tau^{1/2}} e^{N[f(\xi,w)-g(a,\tau,\eta,w)]} G_N(a,\tau,\xi,\eta),
\end{equation}
where the $\xi$-integration is over any counter-clockwise oriented
contour around $0$ that does not encircle $-1$, the $a$,$\tau$-contours are straight lines, and, using the notation $\eta=\Im z$, the functions $f$ and $g$ are defined by
\begin{equation}
    \label{eq:fffr}
    f(\xi,w):= -w \xi+\log (1+\xi)-\log \xi-\frac{|z|^2}{1+\xi},      
\end{equation}
\begin{equation}
    \begin{split}
        \label{bosonphfAr}
        g(a,\tau,\eta,w)&:= -wa+\frac{1}{2}\log [1+2a+a^2\tau]-\log a-\frac{1}{2}\log \tau\\
        &\quad -\frac{|z|^2(1+a)-2\eta^2 a^2 (1-\tau)}{1+2a+a^2\tau}.
    \end{split}
\end{equation} 
The fact that the integral in~\eqref{realsusyexplAAr} is absolutely convergent follows by the explicit expressions of $f$ and $g$ in~\eqref{eq:fffr}-\eqref{bosonphfAr}. Note that $\epsilon$ in $w=E+\ii\epsilon$ is introduced only to make the 
$a$-integration on the imaginary axis absolutely convergent, hence, after the contours deformations described in Section~\ref{sec:cc} below, for all the practical purposes we can assume that $\epsilon=0$ and so $w=E$. Indeed, after deforming the $a$-contour so that it ends in the second quadrant, i.e.\ in the region $\{a\in\C :\Re[a]<0, \Im[a]>0\}$, we can take the limit $\epsilon\to 0^+$ since the integral in~\eqref{realsusyexplAAr} is absolutely convergent for $\epsilon=0$. Note that $g(a,1,\eta,w)=f(a,w)$; in particular, we remark that $g(a,1,\eta,w)$ is independent of $\eta$ for any $a\in\C  $. Furthermore, the function
\begin{equation}
    G_N(a,\tau,\xi,\eta):= G_{1,N}(a,\tau,\xi)+G_{2,N}(a,\tau,\xi,\eta)
\end{equation}
is given by
\begin{equation}
    \label{eq:newbetG}
    \begin{split}
        G_{1,N}={}&\Bigl(
        N^2\frac{p_{2,0,0}}{a^2 \xi ^2 (\xi +1)^2 \tau }-N\frac{ p_{1,0,0}}{a^2 \xi ^2 (\xi +1) \tau  }+\delta  N^2\frac{p_{2,0,1}}{a \xi  (\xi +1)^2 \tau  }- N \delta \frac{p_{1,0,1}}{a \xi  (\xi +1) \tau  } \\
        &\quad + N^2\delta ^2\frac{ p_{2,0,2}}{(\xi +1)^2}\Bigr)\times \Bigl((a^2\tau+2a+1)^2(\xi+1)^2\Bigr)^{-1}, \\
        G_{2,N}={}&\Bigl(
        N^2\eta ^2 \frac{p_{2,2,0}}{a \xi  (\xi +1)^3 \tau }-N\eta ^2 \frac{p_{1,2,0}}{a \xi  \tau  }+  N^2\eta ^2\delta \frac{p_{2,2,1}}{(\xi +1) }\Bigr)\times \Bigl((a^2\tau+2a+1)^2(\xi+1)^2\Bigr)^{-1},
    \end{split}
\end{equation}
where \(p_{i,j,k}=p_{i,j,k}(a, \tau, \xi)\) are explicit polynomials in \(a,\tau,\xi\) which we defer to Appendix~\ref{appendix poly} and \(\delta:=1 -\abs{z}^2\). The indices $i,j,k$ in the definition of $p_{i,j,k}$ denote the $N$, $\eta$ and $\delta$ power, respectively. We split $G_N$ as the sum of $G_{1,N}$ and $G_{2,N}$ since $G_{1,N}$ depends only on $|z|$, whilst $G_{2,N}$ depends explicitly on $\eta=\Im z$, in particular $G_{2,N}=0$ if $z\in \R  $.

\subsection{Choice of the integration contours}\label{sec:cc}
From now on we only focus on the regime $\widetilde{\delta}=0$, i.e.\ $|z|=1$. The proof in the case $\widetilde{\delta}\ge C_1$ for some large $C_1>0$ only requires slightly different choice of contours but otherwise the analysis of the integrals on them is analogous and so we omit the details.

\subsubsection{Geometry of \(\set{\Re g>0}\) in the regime \(\abs{\tau}\gtrsim 1\) (see Figure~\ref{fig:cont large})}
In the regime \(\abs{\tau}\gtrsim 1\) there is a transition at \(\abs{1-\tau}\eta^2 = E^{2/3}\). In the regime \(\abs{1-\tau}\eta^2\lesssim E^{2/3}\) there is only one relevant length scale of \(E^{-1/3}\). On the contrary in the regime \(\abs{1-\tau}\eta^2\gg E^{2/3}\) there are two relevant length scales \(E^{-1/3}\ll \abs{1-\tau}\eta^2 E^{-1}\), the former describing the size of the two connected components of \(\set{\Re g>0}\) close to \(0\) and the latter describing the distance to the infinite connected component of \(\set{\Re g>0}\) in the direction \(+\infty\). In Figure~\ref{fig:cont large} we present the level sets of \(\Re g\) for various sizes of \(\abs{1-\tau}\) and \(\eta\). 
\begin{figure}[htbp]
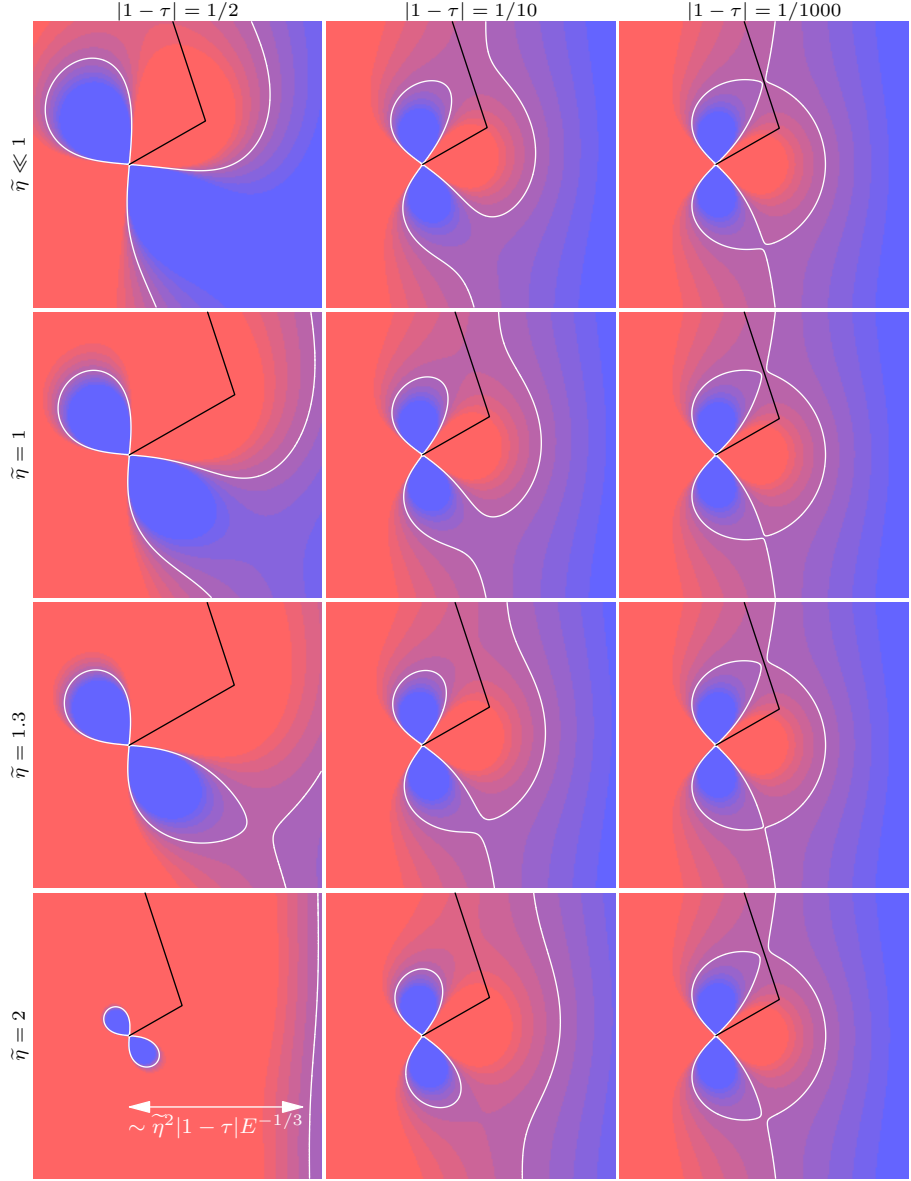

    \begin{asydef}
        pair xyMin=(-10000,-15000);
        pair xyMax=(20000,15000);
        var ee = 10.0^(-12.0);
        var ffS=-3*ee^(2/3)/4;
        var eta=0.;
    \end{asydef}
    \centering\footnotesize
    {\setlength{\tabcolsep}{0.12em}
    \renewcommand{\arraystretch}{.6}
    \begin{tabular}{cccc}
        & \(\abs{1-\tau}=1/2\)& \(\abs{1-\tau}=1/10\)& \(\abs{1-\tau}=1/1000\)\\
        \rotatebox{90}{\parbox{3.8cm}{\centering\(\wt\eta\ll 1\)}}& 
        \begin{asy}
            pair tau=tauC(.5);
            real r=sc(ee,tau,eta);
            real f(real x, real y) {return ff((x,y),ee,tau,eta) ;}
            picture bar; 
            bounds range=image(f,Range(ffS-ee^(2/3),ffS+ee^(2/3)),xyMin,xyMax,N,Palette);
            defaultpen(.5bp);
            draw(contour(f,xyMin,xyMax,new real[] {ffS},N,operator ..),white);
            draw((0,0)--aC(r,1)--aC(r,1 + (xyMax.y/(phiC*r) - sin(pi/6))/sin(3*pi/5)));
        \end{asy}
        
        &
        \begin{asy}
            pair tau=tauC(.9);
            real r=sc(ee,tau,eta);
            real f(real x, real y) {return ff((x,y),ee,tau,eta) ;}
            picture bar; 
            bounds range=image(f,Range(ffS-ee^(2/3),ffS+ee^(2/3)),xyMin,xyMax,N,Palette);
            defaultpen(.5bp);
            draw(contour(f,xyMin,xyMax,new real[] {ffS},N,operator ..),white);
            draw((0,0)--aC(r,1)--aC(r,1 + (xyMax.y/(phiC*r) - sin(pi/6))/sin(3*pi/5)));
        \end{asy}
        
        &
        \begin{asy}
            pair tau=tauC(.999);
            real r=sc(ee,tau,eta);
            real f(real x, real y) {return ff((x,y),ee,tau,eta) ;}
            picture bar; 
            bounds range=image(f,Range(ffS-ee^(2/3),ffS+ee^(2/3)),xyMin,xyMax,N,Palette);
            defaultpen(.5bp);
            draw(contour(f,xyMin,xyMax,new real[] {ffS},N,operator ..),white);
            draw((0,0)--aC(r,1)--aC(r,1 + (xyMax.y/(phiC*r) - sin(pi/6))/sin(3*pi/5))); 
        \end{asy} 
        
        \\ 
        \rotatebox{90}{\parbox{3.8cm}{\centering\(\wt\eta= 1\)}}&
        \begin{asy}
            real eta=10.^(-4.);
            pair tau=tauC(.5);
            real r=sc(ee,tau,eta);
            real f(real x, real y) {return ff((x,y),ee,tau,eta) ;}
            picture bar; 
            bounds range=image(f,Range(ffS-ee^(2/3),ffS+ee^(2/3)),xyMin,xyMax,N,Palette);
            defaultpen(.5bp);
            draw(contour(f,xyMin,xyMax,new real[] {ffS},N,operator ..),white);
            draw((0,0)--aC(r,1)--aC(r,1 + (xyMax.y/(phiC*r) - sin(pi/6))/sin(3*pi/5)));
        \end{asy}
        
        &
        \begin{asy}
            real eta=10.^(-4.);
            pair tau=tauC(.9);
            real r=sc(ee,tau,eta);
            real f(real x, real y) {return ff((x,y),ee,tau,eta) ;}
            picture bar; 
            bounds range=image(f,Range(ffS-ee^(2/3),ffS+ee^(2/3)),xyMin,xyMax,N,Palette);
            defaultpen(.5bp);
            draw(contour(f,xyMin,xyMax,new real[] {ffS},N,operator ..),white);
            draw((0,0)--aC(r,1)--aC(r,1 + (xyMax.y/(phiC*r) - sin(pi/6))/sin(3*pi/5)));
        \end{asy}
        
        &
        \begin{asy}
            real eta=10.^(-4.);
            pair tau=tauC(.999);
            real r=sc(ee,tau,eta);
            real f(real x, real y) {return ff((x,y),ee,tau,eta) ;}
            picture bar; 
            bounds range=image(f,Range(ffS-ee^(2/3),ffS+ee^(2/3)),xyMin,xyMax,N,Palette);
            defaultpen(.5bp);
            draw(contour(f,xyMin,xyMax,new real[] {ffS},N,operator ..),white);
            draw((0,0)--aC(r,1)--aC(r,1 + (xyMax.y/(phiC*r) - sin(pi/6))/sin(3*pi/5)));
        \end{asy} 
        
        \\
        \rotatebox{90}{\parbox{3.8cm}{\centering\(\wt\eta= 1.3\)}}& 
        \begin{asy}
            real eta=1.3*10.^(-4.);
            pair tau=tauC(.5);
            real r=sc(ee,tau,eta);
            real f(real x, real y) {return ff((x,y),ee,tau,eta) ;}
            picture bar; 
            bounds range=image(f,Range(ffS-ee^(2/3),ffS+ee^(2/3)),xyMin,xyMax,N,Palette);
            defaultpen(.5bp);
            draw(contour(f,xyMin,xyMax,new real[] {ffS},N,operator ..),white);
            draw((0,0)--aC(r,1)--aC(r,1 + (xyMax.y/(phiC*r) - sin(pi/6))/sin(3*pi/5)));
        \end{asy}
        
        &
        \begin{asy}
            real eta=1.3*10.^(-4.);
            pair tau=tauC(.9);
            real r=sc(ee,tau,eta);
            real f(real x, real y) {return ff((x,y),ee,tau,eta) ;}
            picture bar; 
            bounds range=image(f,Range(ffS-ee^(2/3),ffS+ee^(2/3)),xyMin,xyMax,N,Palette);
            defaultpen(.5bp);
            draw(contour(f,xyMin,xyMax,new real[] {ffS},N,operator ..),white);
            draw((0,0)--aC(r,1)--aC(r,1 + (xyMax.y/(phiC*r) - sin(pi/6))/sin(3*pi/5)));
        \end{asy}
        
        &
        \begin{asy}
            real eta=1.3*10.^(-4.);
            pair tau=tauC(.999);
            real r=sc(ee,tau,eta);
            real f(real x, real y) {return ff((x,y),ee,tau,eta) ;}
            picture bar; 
            bounds range=image(f,Range(ffS-ee^(2/3),ffS+ee^(2/3)),xyMin,xyMax,N,Palette);
            defaultpen(.5bp);
            draw(contour(f,xyMin,xyMax,new real[] {ffS},N,operator ..),white);
            draw((0,0)--aC(r,1)--aC(r,1 + (xyMax.y/(phiC*r) - sin(pi/6))/sin(3*pi/5)));
        \end{asy}
        
        \\
        \rotatebox{90}{\parbox{3.8cm}{\centering\(\wt\eta= 2\)}}& 
        \begin{asy}
            pair xyMin=(-20000.1,-30000);
            pair xyMax=(40000,30000);
            real eta=2*10.^(-4.);
            pair tau=tauC(.5);
            real r=sc(ee,tau,eta);
            real f(real x, real y) {return ff((x,y),ee,tau,eta) ;}
            picture bar; 
            bounds range=image(f,Range(ffS-ee^(2/3),ffS+ee^(2/3)),xyMin,xyMax,N,Palette);
            defaultpen(.5bp);
            draw(contour(f,xyMin,xyMax,new real[] {ffS},N,operator ..),white);
            draw("$\sim \widetilde\eta^2|1-\tau|E^{-1/3}$",(0,-15000)--(36000,-15000),arrow=Arrows,white);
            draw((0,0)--aC(r,1)--aC(r,1 + (xyMax.y/(phiC*r) - sin(pi/6))/sin(3*pi/5)));
        \end{asy} 
        
        &
        \begin{asy}
            pair xyMin=(-10000.1,-15000);
            pair xyMax=(20000,15000);
            real eta=2*10.^(-4.);
            pair tau=tauC(.9);
            real r=sc(ee,tau,eta);
            real f(real x, real y) {return ff((x,y),ee,tau,eta) ;}
            picture bar; 
            bounds range=image(f,Range(ffS-ee^(2/3),ffS+ee^(2/3)),xyMin,xyMax,N,Palette);
            defaultpen(.5bp);
            draw(contour(f,xyMin,xyMax,new real[] {ffS},N,operator ..),white);
            draw((0,0)--aC(r,1)--aC(r,1 + (xyMax.y/(phiC*r) - sin(pi/6))/sin(3*pi/5)));
        \end{asy}
        
        &
        \begin{asy}
            pair xyMin=(-10000.1,-15000);
            pair xyMax=(20000,15000);
            real eta=2*10.^(-4.);
            pair tau=tauC(.999);
            real r=sc(ee,tau,eta);
            real f(real x, real y) {return ff((x,y),ee,tau,eta) ;}
            picture bar; 
            bounds range=image(f,Range(ffS-ee^(2/3),ffS+ee^(2/3)),xyMin,xyMax,N,Palette);
            defaultpen(.5bp);
            draw(contour(f,xyMin,xyMax,new real[] {ffS},N,operator ..),white);
            draw((0,0)--aC(r,1)--aC(r,1 + (xyMax.y/(phiC*r) - sin(pi/6))/sin(3*pi/5)));
        \end{asy} 
        
    \end{tabular}}
    \caption{Contour plot of \(\Re g(\cdot,\tau,\wt\eta E^{1/3},E)\) for \(E>0\) for \(\tau\in\Omega\) with \(\abs{1-\tau}\le 1/2\). The white lines represent the level set \(\Re g(\cdot,\tau,\eta,E)=0\), while the black line represents the contour \(r\Lambda\) for the \(a\)-integration. All figures are on the same scale \(E^{-1/3}\), except for the bottom left figure which shows the larger scale \(E^{-1/3}\tilde\eta^2\abs{1-\tau}\), in addition to the \(E^{-1/3}\) length scale of the blue figure eight. The solid red colours are applied to regions where \(\Re g>E^{2/3}\), while the solid blue colours are applied to regions where \(\Re g<-E^{2/3}\).}\label{fig:cont large}
\end{figure}

\subsubsection{Geometry of \(\set{\Re g>0}\) in the regime \(\abs{\tau}\ll 1\) for \(\tilde\eta=0\) (see Figure~\ref{fig:cont small 0})}
In the regime \(\abs{\tau}\ll 1\) for \(\tilde\eta=0\) there is a transitions around \(\abs{\tau} = E\). For \(\abs{\tau}\ll E\) there are two components of \(\set{\Re g<0}\), one unbounded at a distance of \(E^{-1}\) to the right of the origin, and a bounded one at a distance of \(\abs{\tau}^{-1}\) below the origin. As \(\abs{\tau}\) approaches \(E\) the two components merge but remain separated from the origin at a distance of \(\abs{\tau}^{-2/3}E^{-1/3}\), see Figure~\ref{fig:cont small 0} for an illustration. 
\begin{figure}
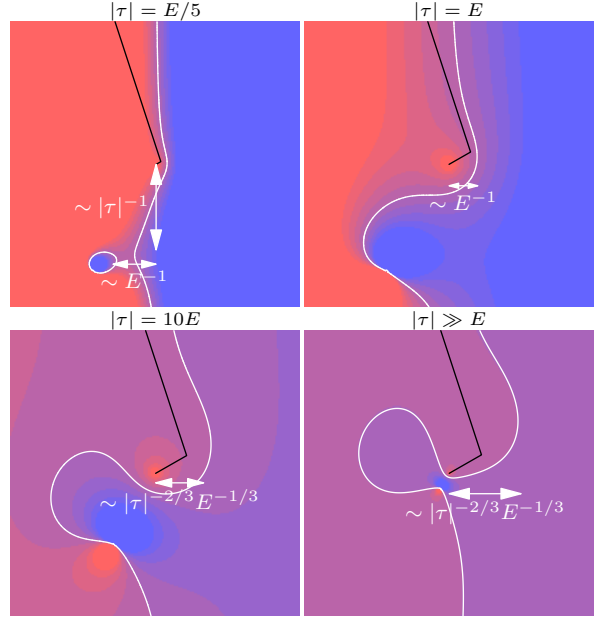

    \centering\footnotesize
    {\setlength{\tabcolsep}{0.12em}
    \renewcommand{\arraystretch}{.6}
    \begin{tabular}{cc}
        \(\abs{\tau}=E/5\)& \(\abs{\tau}=E\)\\
        \begin{asy}
            pair xyMin=(-1.01*10.^6,-10.^6);
            pair xyMax=(10.^6,10.^6);
            var ee = 10.0^(-5.0);
            var ffS=-3*ee^(2/3)/4;
            var eta=0.;
            real tt=ee/5;
            pair tau=tauC(tt);
            real r=sc(ee,tau,eta);
            real f(real x, real y) {return ff((x,y),ee,tau,eta) ;}
            picture bar; 
            bounds range=image(f,Range(-1,1),xyMin,xyMax,N,Palette);
            defaultpen(.5bp);
            draw(contour(f,xyMin,xyMax,new real[] {ffS},N,operator ..),white);
            draw((0,0)--aC(r,1)--aC(r,1 + (xyMax.y/(phiC*r) - sin(pi/6))/sin(3*pi/5)));
            draw("$\sim |\tau|^{-1}$",(0,0)--(0,-.6*10.^6),arrow=Arrows,white);
            draw("$\sim E^{-1}$",(-.3*10.^6,-.7*10.^6)--(0,-.7*10.^6),arrow=Arrows,white);
        \end{asy}
        
        &
        
        \begin{asy}
            pair xyMin=(-2.01*10.^5,-2*10.^5);
            pair xyMax=(2*10.^5,2*10.^5);
            var ee = 10.0^(-5.0);
            var ffS=-3*ee^(2/3)/4;
            var eta=0.;
            real tt=ee;
            pair tau=tauC(tt);
            real r=sc(ee,tau,eta);
            real f(real x, real y) {return ff((x,y),ee,tau,eta) ;}
            picture bar; 
            bounds range=image(f,Range(-1,1),xyMin,xyMax,N,Palette);
            defaultpen(.5bp);
            draw(contour(f,xyMin,xyMax,new real[] {ffS},N,operator ..),white);
            draw((0,0)--aC(r,1)--aC(r,1 + (xyMax.y/(phiC*r) - sin(pi/6))/sin(3*pi/5)));
            draw("$\sim E^{-1}$",(0,-.3*10.^5)--(.4*10.^5,-.3*10.^5),arrow=Arrows,white);
        \end{asy}
        
        \\
        \(\abs{\tau}=10E\)& \(\abs{\tau}\gg E\)\\
        \begin{asy}
            pair xyMin=(-3.01*10.^4,-3*10.^4);
            pair xyMax=(3*10.^4,3*10.^4);
            var ee = 10.0^(-5.0);
            var ffS=-3*ee^(2/3)/4;
            var eta=0.;
            real tt=10*ee;
            pair tau=tauC(tt);
            real r=sc(ee,tau,eta);
            real f(real x, real y) {return ff((x,y),ee,tau,eta) ;}
            picture bar; 
            bounds range=image(f,Range(-1,1),xyMin,xyMax,N,Palette);
            defaultpen(.5bp);
            draw(contour(f,xyMin,xyMax,new real[] {ffS},N,operator ..),white);
            draw((0,0)--aC(r,1)--aC(r,1 + (xyMax.y/(phiC*r) - sin(pi/6))/sin(3*pi/5)));
            pair z2=1.5*aC(r,1) + (0,-.4*10.^4);
            draw("$\sim |\tau|^{-2/3}E^{-1/3}$",(0,-.2*10.^4)--(1*10.^4,-.2*10.^4),arrow=Arrows,white);
        \end{asy} 
        
        &
        \begin{asy}
            pair xyMin=(-1.41*10.^3,-1.4*10.^3);
            pair xyMax=(1.4*10.^3,1.4*10.^3);
            var ee = 10.0^(-5.0);
            var ffS=-3*ee^(2/3)/4;
            var eta=0.;
            real tt=1000*ee;
            pair tau=tauC(tt);
            real r=sc(ee,tau,eta);
            real f(real x, real y) {return ff((x,y),ee,tau,eta) ;}
            picture bar; 
            bounds range=image(f,Range(-1,1),xyMin,xyMax,N,Palette);
            defaultpen(.5bp);
            draw(contour(f,xyMin,xyMax,new real[] {ffS},N,operator ..),white);
            draw((0,0)--aC(r,1)--aC(r,1 + (xyMax.y/(phiC*r) - sin(pi/6))/sin(3*pi/5)));
            draw("$\sim |\tau|^{-2/3}E^{-1/3}$",(0,-.2*10.^3)--(.7*10.^3,-.2*10.^3),arrow=Arrows,white);
        \end{asy} 
        
    \end{tabular}}
    \caption{Contour plot of \(\Re g(\cdot,\tau,0,E)\) for \(E>0\) for \(\tau\in\Omega\) with \(\abs{\tau}\ll 1\). The solid white lines represent the level set \(\Re g(\cdot,\tau,0,E)=0\), while the solid black line represents the contour \(r\Lambda\) for the \(a\)-integration. The solid red colours are applied to regions where \(\Re g>1\), while the solid blue colours are applied to regions where \(\Re g<-1\)}\label{fig:cont small 0}
\end{figure}

\subsubsection{Geometry of \(\set{\Re g>0}\) in the regime \(\abs{\tau}\ll 1\) for \(\tilde\eta>0\) (see Figure~\ref{fig:cont small})}
For \(\abs{\tau}\ll 1\) and \(\tilde\eta>0\) there are two components of \(\set{\Re g<0}\), one unbounded one at a distance of \(\abs{\tau}^{-1}E^{-1/3}\) to the right of the origin, and a bounded one at the bottom left of the origin. The bounded component is an approximate disk of diameter \(\abs{\tau}^{-1}\) for \(\abs{\tau}\ll E^{2/3}\) and is transformed into a ``lying eight'' of diameter \(\abs{\tau}^{-1/2}E^{-1/3}\) as \(\abs{\tau}\gg E^{2/3}\), see Figure~\ref{fig:cont small} for an illustration. 
\begin{figure}
    \centering\footnotesize
    {\setlength{\tabcolsep}{0.12em}
    \renewcommand{\arraystretch}{.6}
    \begin{tabular}{cc}
        \(\abs{\tau}\ll E^{2/3}\)&\(\abs{\tau}=E^{2/3}\)\\        
        \begin{asy}
            pair xyMin=(-.51,-1.01);
            pair xyMax=(1.5,1.);
            var ee = 10.^(-6.);
            var ffS=-3*ee^(2/3)/4;
            var eta=ee^(1./3);
            real tt=10.^(-7.);
            pair tau=tauC(tt);
            real r=sc(ee,tau,eta);
            real f(real x, real y) {return max(-1,min(1,ff((r*x,r*y),ee,tau,eta))) ;}
            picture bar;
            bounds range=image(f,Range(-1,1),xyMin,xyMax,N,Palette);
            defaultpen(.5bp);
            draw(contour(f,xyMin,xyMax,new real[] {ffS},N,operator ..),white);
            draw((0,0)--aC(r,1)/r--aC(r,1 + (xyMax.y/(phiC) - sin(pi/6))/sin(3*pi/5))/r);
            draw("$|\tau|^{-1}$",(-.08,-.1)--(.02,-.1),arrow=Arrows,white);
            draw("$|\tau|^{-1}E^{-1/3}$",(0,-.6)--(1.,-.6),arrow=Arrows,white);
        \end{asy} 
        
        & 
        \begin{asy}
            pair xyMin=(-.51,-1.01);
            pair xyMax=(1.5,1.);
            var ee = 10.^(-6.);
            var ffS=-3*ee^(2/3)/4;
            var eta=ee^(1./3);
            real tt=ee^(2/3);
            pair tau=tauC(tt);
            real r=sc(ee,tau,eta);
            real f(real x, real y) {return max(-1,min(1,ff((r*x,r*y),ee,tau,eta))) ;}
            picture bar;
            bounds range=image(f,Range(-1,1),xyMin,xyMax,N,Palette);
            defaultpen(.5bp);
            draw(contour(f,xyMin,xyMax,new real[] {ffS},N,operator ..),white);
            draw((0,0)--aC(r,1)/r--aC(r,1 + (xyMax.y/(phiC) - sin(pi/6))/sin(3*pi/5))/r);
            draw("$|\tau|^{-1}$",(-.08,-.1)--(.02,-.1),arrow=Arrows,white);
            draw("$|\tau|^{-1}E^{-1/3}$",(0,-.6)--(1.,-.6),arrow=Arrows,white);
        \end{asy}
        
        \\\(\abs{\tau}=100E^{2/3}\)&\(1\gg\abs{\tau}\gg E^{2/3}\)\\
        \begin{asy}
            pair xyMin=(-.51,-1.01);
            pair xyMax=(1.5,1.);
            var ee = 10.^(-6.);
            var ffS=-3*ee^(2/3)/4;
            var eta=ee^(1./3);
            real tt=100.*ee^(2/3);
            pair tau=tauC(tt);
            real r=sc(ee,tau,eta);
            real f(real x, real y) {return max(-1,min(1,ff((r*x,r*y),ee,tau,eta))) ;}
            picture bar;
            bounds range=image(f,Range(-1,1),xyMin,xyMax,N,Palette);
            defaultpen(.5bp);
            draw(contour(f,xyMin,xyMax,new real[] {ffS},N,operator ..),white);
            draw((0,0)--aC(r,1)/r--aC(r,1 + (xyMax.y/(phiC) - sin(pi/6))/sin(3*pi/5))/r);
            draw("$|\tau|^{-1/2}E^{-1/3}$",(-.1,-.15)--(.1,-.15),arrow=Arrows,white);
            draw("$|\tau|^{-1}E^{-1/3}$",(0,-.7)--(1.,-.7),arrow=Arrows,white);
        \end{asy}
        
        &
        \begin{asy}
            pair xyMin=(-.51,-1.01);
            pair xyMax=(1.5,1.);
            var ee = 10.^(-6.);
            var ffS=-3*ee^(2/3)/4;
            var eta=ee^(1./3);
            real tt=600.*ee^(2/3);
            pair tau=tauC(tt);
            real r=sc(ee,tau,eta);
            real f(real x, real y) {return max(-1,min(1,ff((r*x,r*y),ee,tau,eta))) ;}
            picture bar;
            bounds range=image(f,Range(-1,1),xyMin,xyMax,N,Palette);
            defaultpen(.5bp);
            draw(contour(f,xyMin,xyMax,new real[] {ffS},N,operator ..),white);
            draw((0,0)--aC(r,1)/r--aC(r,1 + (xyMax.y/(phiC) - sin(pi/6))/sin(3*pi/5))/r);
            draw("$|\tau|^{-1/2}E^{-1/3}$",(-.25,-.3)--(.3,-.3),arrow=Arrows,white);
            draw("$|\tau|^{-1}E^{-1/3}$",(0,-.7)--(.85,-.7),arrow=Arrows,white);
        \end{asy}

    \end{tabular}}
    \caption{Contour plot of \(\Re g(\cdot,\tau,\tilde \eta E^{1/3},E)\) for \(0\le E\ll1\) and \(\tilde \eta>0\) for \(\tau\in\Omega\) with \(\abs{\tau}\ll 1\). The solid white lines represent the level set \(\Re g(\cdot,\tau,\tilde\eta E^{1/3},E)=0\), while the solid black line represents the contour \(r\Lambda\) for the \(a\)-integration. The solid red colours are applied to regions where \(\Re g>1\), while the solid blue colours are applied to regions where \(\Re g<-1\).}\label{fig:cont small}
\end{figure}

\subsubsection{Deformation of contours} 
Now we explain how the contours in~\eqref{realsusyexplAAr} can be deformed. The $\xi$-contour can be freely deformed as long as it does not cross $0$ and $-1$. We can deform the $\tau$-contour as long as $\Im[\tau]<0$, then the $a$-contour has to be deformed accordingly to ensure the absolute convergence of the integral. The $a$-contour at infinity can be freely deformed, independently of $\tau$, as long as it ends in the second quadrant; on the other hand the way how it goes out from zero depends on $\tau$. Moreover along the deformation of the $a$-contour we cannot cross the points $(-1\pm \sqrt{1-\tau})\tau^{-1}$ which are the singularities of the term $a^2\tau+2a+1$ in $g$ and $G_N$. In particular, note that the $\tau$ and $a$ contours cannot be deformed independently: we first deform the $\tau$-contour and then we deform the $a$-contour accordingly. In the remainder of this section we will always deform the integration contours as described above.

Next, we describe how we concretely deform the integration contours in~\eqref{realsusyexplAAr}
in accordance with the rules just described. From now on we denote the $\xi$-contour by $\Gamma$, the $\tau$-contour by $\Omega$, and the $a$-contour by $\Lambda$. In particular, we choose
\begin{equation}
    \label{eq contour choice}
    \begin{split}
        \Gamma&:=-\frac{1}{2}+E^{-1/3}+E^{-1/3}\partial\DD,\\
        \Omega&:= \Bigl[0,\frac{1}{3\sqrt{3}}-\frac{\ii}{3}\Bigr] \cup \Bigl(\frac{1}{3\sqrt{3}}-\frac{\ii}{3},1-\frac{1}{3\sqrt{3}}-\frac{\ii}{3}\Bigr]\cup \Bigl(1-\frac{1}{3\sqrt{3}}-\frac{\ii}{3},1\Bigr],\\
        \Lambda&:=\big[0,z_0\big]\cup\Bigl(z_0, z_0+ e^{3\ii\pi/5}\infty\Bigr), \qquad  z_0: = \frac{\sin(4\pi/15)}{\sin(17\pi/30)} e^{\ii \pi/6},
    \end{split}
\end{equation}
and  rescale \(r\Lambda\) with a parameter \(r>0\) chosen later depending on \(\widetilde{\eta},\tau,E\). Here the interval \((0,e^{3\ii\pi/5}\infty)\) is understood as the open half-line going out to \(\infty\) in the \(e^{3\ii\pi/5}\) direction. Note that the contour \(\Lambda\) is designed such that \(e^{\ii\pi/3}\in\Lambda\). According to the geometry of \(\set{\Re g<0}\) we choose the scaling parameter 
\begin{equation}\label{r def}
    r=r(\wt\eta,E,\tau):= \frac{1}{E^{1/3}}\biggl( \frac{1}{2}+\frac{1}{2(E\vee\abs{\tau})^{2/3}}+\frac{\tilde\eta^2}{\abs{\tau}} \biggr).
\end{equation}

We can thus rewrite~\eqref{realsusyexplAAr} as
\begin{equation}
    \label{eq:finnewform}
    \E \Tr   [Y^z-w]^{-1}=\frac{N}{4\pi \ii} \oint_\Gamma \dif \xi \int_\Omega \dif \tau \int_{r\Lambda} \dif a \frac{\xi^2a}{\tau^{1/2}} e^{N[f(\xi,w)-g(a,\tau,\eta,w)]} G_N(a,\tau,\xi,\eta).
\end{equation}
In the following we split the computation of the leading term of~\eqref{eq:finnewform} into two parts: (i) in Section~\ref{sec:smalll} we deal with the regime when either $|\xi|\le N^\omega$ or $|a\tau|\le N^{2\omega}$ or $|\tau|\le N^{-\omega}$, for some small fixed $\omega>0$, (ii) in Section~\ref{sec:large} we deal with the complementary regime when $|\xi|$ and $|a\tau|$ are bigger than $N^\omega$ and $|\tau|>N^{-\omega}$.

\subsection{Small \texorpdfstring{$|\xi|$}{|xi|} or small \texorpdfstring{$|a\tau|$}{|a tau|} or small \texorpdfstring{$|\tau|$}{|tau|} regime}\label{sec:smalll}
By the explicit form of the phase functions $f(\xi,E)$ and $g(a,\tau,\eta,E)$ in~\eqref{eq:fffr}-\eqref{bosonphfAr} it follows that the contribution to~\eqref{realsusyexplAAr} of the small $|\xi|$ and $|a\tau|$ regimes is negligible; in particular, the smallness comes from the logarithmic factors in the phase functions. This is made rigorous in Lemma~\ref{lem:smallatau}. For this purpose we define the contours 
\begin{equation}
    \label{eq:smallreg}
    \widetilde{\Gamma}:= \{\xi\in \Gamma : |\xi|\le N^\omega\}, \qquad \widetilde{\Lambda}:= \{a\in r\Lambda : |a|\le N^{2\omega} |\tau|^{-1}\}, \qquad \widetilde{\Omega}:=\{\tau\in \Omega : |\tau|\le N^{-\omega}\},
\end{equation}
for some small fixed $\omega>0$. Note that $r\Lambda\setminus \widetilde{\Lambda}$ is always connected.

\begin{lemma}\label{lem:smallatau}
    Let $f,g, G_N$ be defined in~\eqref{eq:fffr}-\eqref{eq:newbetG}, and let $\widetilde{\Gamma}$, $\widetilde{\Omega}$, $\widetilde{\Lambda}$ be the contours defined in~\eqref{eq:smallreg}, then for any large constant $C_4>0$, for any $E=\lambda N^{-3/2}$, with $C_4^{-1}\le\lambda\le C_4$, and for any $\widetilde{\eta}=0$ or $C_4^{-1}\le|\widetilde{\eta}|\le C_4$, we have that 
    \begin{equation}
        \label{eq:irrreg}
        \begin{split}
            &\left|\left(\oint_\Gamma \dif \xi\int_\Omega\dif \tau\int_{r\Lambda} \dif a-\oint_{\Gamma\setminus\widetilde{\Gamma}} \dif \xi\int_{\Omega\setminus\widetilde{\Omega}}\dif \tau\int_{r\Lambda\setminus \widetilde{\Lambda}} \dif a\right)\left[e^{N[f(\xi,E)-g(a,\tau,\eta,E)]} \frac{a\xi^2}{\tau^{1/2}} G_N(a,\tau,\xi,z)\right] \right| \\
            &\qquad \le C e^{-N^{\omega/10}}. 
        \end{split}
    \end{equation}
    The constant $C>0$ only depends on $C_4$.
\end{lemma}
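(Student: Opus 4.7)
The strategy is to establish a pointwise exponential bound of the form $|{\text{integrand}}|\le N^{C}e^{-N^{c\omega}}$ on each of the three excluded pieces
\[
\widetilde\Gamma\times\Omega\times r\Lambda,\qquad (\Gamma\setminus\widetilde\Gamma)\times\widetilde\Omega\times r\Lambda,\qquad (\Gamma\setminus\widetilde\Gamma)\times(\Omega\setminus\widetilde\Omega)\times\widetilde\Lambda,
\]
with $c>0$ independent of all parameters. Because the algebraic prefactor $a\xi^2\tau^{-1/2}G_N$ grows at most polynomially in $N$, $|a|$, $|\xi|$, $|\tau|^{-1}$ on the chosen contours (as is visible from the explicit form~\eqref{eq:newbetG}) and the contour lengths are likewise at most polynomial in $N$, integrating such a pointwise bound then yields the claimed $Ce^{-N^{\omega/10}}$ after $c$ is taken large enough. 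Each of the three pieces is treated by isolating a different logarithmic term in $f$ or $g$ that provides the decay.

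The small $|\tau|$ piece is the simplest. The term $-\tfrac12\log\tau$ in $g$ contributes $-\tfrac N2\log|\tau|\ge \tfrac{N\omega}{2}\log N$ to $\Re(-Ng)$, giving a super-polynomial suppression by $|\tau|^{N/2}$. The choice of scaling $r=r(\widetilde\eta,E,\tau)$ in~\eqref{r def} is engineered precisely so that $|1+2a+a^2\tau|$ and the denominators of $G_N$ remain at most polynomial in $N$ along $r\Lambda$, so no cancellation can occur. The small $|\xi|$ piece exploits the geometry of $\Gamma$: as a circle of radius $E^{-1/3}\sim N^{1/2}$ centred at $-\tfrac12+E^{-1/3}$, the condition $|\xi|\le N^{\omega}\ll E^{-1/3}$ confines $\xi$ to a short arc near $-\tfrac12$. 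A direct computation with $|z|=1$ gives $\Re f(-\tfrac12,E)=\log|{-1/2}|-\log|{-1/2}|-2+\mathcal O(E)=-2+\mathcal O(E)$, hence $\Re f(\xi,E)\le -c_0<0$ uniformly on $\widetilde\Gamma$. The matching upper bound $\Re(-Ng)\le CN$ on the full $a,\tau$ contours follows from the geometric description of $\{\Re g>0\}$ established in Figures~\ref{fig:cont large}--\ref{fig:cont small}, and yields the desired exponential decay $e^{-c_0 N}$.

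The small $|a\tau|$ piece, where $|\xi|\ge N^\omega$ and $|\tau|\ge N^{-\omega}$, is the most delicate case and is the main obstacle. The starting identity is
\[
-\log a+\tfrac12\log(1+2a+a^2\tau)-\tfrac12\log\tau \;=\; \tfrac12\log\frac{1+2a+a^2\tau}{a^2\tau},
\]
so on the relevant region $|a\tau|\le N^{2\omega}$ the argument of this logarithm has modulus $\ge|2/(a\tau)|-\mathcal O(1)\gtrsim N^{-2\omega}$ in a controlled way. I would split $\widetilde\Lambda$ into the subranges $|a|\le 1$, $1\le|a|\le|\tau|^{-1}$ and $|\tau|^{-1}\le|a|\le N^{2\omega}|\tau|^{-1}$, mirroring the regime decomposition behind Figures~\ref{fig:cont small 0} and~\ref{fig:cont small}. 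In each subrange, working along the initial straight segment of $r\Lambda$ in direction $e^{\ii\pi/6}$ and using $\Im\tau<0$, one shows that at least one of $\Re(-\log a)$, $\Re\bigl[\tfrac12\log((1+2a+a^2\tau)/\tau)\bigr]$ or the rational term $(1+a)/(1+2a+a^2\tau)$ contributes at least $c\omega\log N$ to $\Re(-Ng)$, producing the required decay $N^{-cN\omega}$; this case analysis is where the locally altered $\ll$ convention announced before the lemma comes into play. The technical difficulty lies in the strong coupling between $a$ and $\tau$ in $\log(1+2a+a^2\tau)$ and in the cross term $(1+a-2\widetilde\eta^2 a^2(1-\tau))/(1+2a+a^2\tau)$, which prevents the analysis from being carried out one variable at a time and requires the carefully designed contour directions already in place.
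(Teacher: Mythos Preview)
Your treatment of the small-\(|\tau|\) piece has a genuine gap. You claim that the term \(-\tfrac12\log\tau\) in \(g\) yields a factor \(|\tau|^{N/2}\le N^{-N\omega/2}\) that cannot be cancelled. But the three logarithmic terms in \(g\) combine as
\[
\tfrac12\log(1+2a+a^2\tau)-\log a-\tfrac12\log\tau=\tfrac12\log\frac{1+2a+a^2\tau}{a^2\tau},
\]
and for \(|a\tau|\gg1\) this ratio tends to \(1\), so the logs contribute essentially nothing to \(\Re g\). The regime \(|a\tau|\gg1\) does occur on \(r\Lambda\) for every \(\tau\in\widetilde\Omega\): with \(E\sim N^{-3/2}\) the scaling \(r\) in~\eqref{r def} satisfies \(r|\tau|\gtrsim N^{1/2}|\tau|^{1/3}\gg1\) already at the kink of \(\Lambda\), and the second segment goes to infinity. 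In that range your \(|\tau|^{N/2}\) suppression is entirely absorbed by the competing \(|a|^N/|1+2a+a^2\tau|^{N/2}\) factor, and your claim that ``\(|1+2a+a^2\tau|\) remains at most polynomial in \(N\)'' is beside the point because the dangerous growth comes from \(|a|^N\), not from \(G_N\). The paper resolves this by first integrating out \(\xi\) to get the polynomial bound~\eqref{eq:impb}, and then proving detailed lower bounds on \(\Re g\) along \(r\Lambda\) via an exhaustive twelve-case analysis (Lemma~\ref{lemma lower bound g}); in the large-\(|a|\) subregimes the decay comes from \(-E\Re a\sim E|a|\) or from the \(\widetilde\eta^2\)-term, not from \(\log\tau\).

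A related issue is your ordering. You handle \(\xi\in\widetilde\Gamma\) on the \emph{full} \((a,\tau)\)-domain, which forces you to assert \(\Re g\ge -C\) uniformly on all of \(\Omega\times r\Lambda\); you appeal to the figures for this, but the figures are illustrations, not proofs, and the actual uniform lower bound again requires the case analysis of Lemmas~\ref{lemma lower bound g}--\ref{lem:dec}. The paper avoids this by treating the \(\xi\in\widetilde\Gamma\) piece last, after \(\tau\in\widetilde\Omega\) and \(a\in\widetilde\Lambda\) have already been excised, so that only the benign region \((\Omega\setminus\widetilde\Omega)\times(r\Lambda\setminus\widetilde\Lambda)\) remains. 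Your sketch for \(a\in\widetilde\Lambda\) is closer in spirit to the paper's argument, though there the key input is the monotonicity statement of Lemma~\ref{lem:dec} rather than the direct case splitting you outline.
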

\begin{proof}
    The proof relies on two quantitative lower bounds on \(\Re g\) outlined in the following lemmata, the proofs of which we defer to Appendix~\ref{app:addlemm}. Within these Lemmas and their proofs 
    we deviate from our general convention and
    the notation $f\ll g$ means that $f\le cg$ for a sufficiently small
    $N$-independent constant $c$. 
    \begin{lemma}\label{lemma lower bound g}
        For \(\abs{\tau}\le N^{-\epsilon}\) we have the following lower bound on \(\Re g\) which for clarity we formulate separately depending on the relative sizes of \(\tau,E,a\) and whether \(\wt\eta=0\) or \(\ne 0\).  
        \setlist[enumerate,1]{label={\bfseries\arabic*.}}
        \setlist[enumerate,2]{label={\bfseries\alph*)},ref={\bfseries\theenumi\alph*)}}
        \begin{enumerate}[itemsep=6pt]
            \item \(\abs{\tau}\lesssim E\) and \(\tilde\eta=0\), hence \(r\sim E^{-1}\).
            \begin{enumerate}
                \item\label{case1} \makebox[5cm][l]{\(\abs{a}\le \abs{\tau}^{-1}\):}\(\Re g\gtrsim 1-\log\abs{a\tau}\)
                \item\label{case2} \makebox[5cm][l]{\(\abs{\tau}^{-1}<\abs{a}\lesssim E^{-1} \):}\(\Re g\gtrsim 1\)
                \item\label{case2b} \makebox[5cm][l]{\(\abs{a}\gg E^{-1}\):}\(\Re g\gtrsim E\abs{a}.\)
            \end{enumerate}
            \item \(\abs{\tau}\lesssim E\) and \(\tilde\eta\ne 0\), hence \(r\sim E^{-1/3}\abs{\tau}^{-1}\).
            \begin{enumerate}
                \item\label{case3} \makebox[5cm][l]{\(\abs{a}\lesssim E^{-1}\wedge \abs{\tau}^{-1}\):}\(\Re g\gtrsim 1-\log\abs{a\tau}\)
                \item\label{case4} \makebox[5cm][l]{\(E^{-1}\wedge \abs{\tau}^{-1}\ll \abs{a}\le \abs{\tau}^{-1}\):}\(\Re g\gtrsim E^{2/3}\wt\eta^2\abs{a}-\log\abs{a\tau}\)
                \item\label{case5} \makebox[5cm][l]{\(\abs{\tau}^{-1}<\abs{a}\lesssim\abs{\tau}^{-1}E^{-1/3}\):}\(\Re g\gtrsim E^{2/3}\wt\eta^2\abs{\tau}^{-1}\)
                \item\label{case6} \makebox[5cm][l]{\(\abs{a}\gg E^{-1/3}\abs{\tau}^{-1}\):}\(\Re g\gtrsim E\abs a.\)
            \end{enumerate}
            \item \(E\ll \abs{\tau}\le N^{-\epsilon}\) and \(\wt \eta=0\), hence \(r\sim E^{-1/3}\abs{\tau}^{-2/3}\).
            \begin{enumerate}
                \item\label{case7} \makebox[5cm][l]{\(\abs{a}\ll \abs{\tau}^{-1}\):}\(\Re g\gtrsim -\log\abs{a\tau}\)
                \item\label{case8} \makebox[5cm][l]{\(\abs{\tau}^{-1}\lesssim\abs{a}\lesssim E^{-1/3}\abs{\tau}^{-2/3}\):}\(\Re g\gtrsim E^{2/3}\abs{\tau}^{-2/3}\)
                \item\label{case9} \makebox[5cm][l]{\(\abs{a}\gg E^{-1/3}\abs{\tau}^{-2/3}\):}\(\Re g\gtrsim E \abs a.\)
            \end{enumerate}
            \item \(E\ll \abs{\tau}\le N^{-\epsilon}\) and \(\wt \eta\ne 0\), hence \(r\sim E^{-1/3}\abs{\tau}^{-1}\).
            \begin{enumerate}
                \item\label{case10} \makebox[5cm][l]{\(\abs{a}\ll \abs{\tau}^{-1}\):}\(\Re g\gtrsim -\log\abs{a\tau}\)
                \item\label{case11} \makebox[5cm][l]{\(\abs{\tau}^{-1}\lesssim\abs{a}\lesssim E^{-1/3}\abs{\tau}^{-1}\):}\(\Re g\gtrsim E^{2/3}\wt\eta^2\abs{\tau}^{-1}\)
                \item\label{case12} \makebox[5cm][l]{\(\abs{a}\gg E^{-1/3}\abs{\tau}^{-1}\):}\(\Re g\gtrsim E \abs a.\)
            \end{enumerate}
        \end{enumerate}
    \end{lemma}
    \begin{lemma}\label{lem:dec}
        For any \(1\ge\abs{\tau}\gg E\) with \(\tau\in\Omega\) the function 
        \begin{equation}
            \label{eq:dec}
            x \mapsto \Re g(x e^{\ii\pi/6},\tau,0,E)
        \end{equation}
        is monotonically decreasing in \(x\) for \(0\le x\ll E^{-1/3}\). Moreover, for any \(\eta\ge 0\), and any \(1\ge\abs{\tau}\gg E\), \(0\le x\ll E^{-1/3}\) we have 
        \begin{equation}
            \label{eq:etanoimp}
            \Re g(x e^{\ii\pi/6},\tau,\eta,E) \ge \Re g(x e^{\ii\pi/6},\tau,0,E).
        \end{equation}
    \end{lemma}
    
    We now split the proof of~\eqref{eq:irrreg} into three parts, we first prove that the contribution to~\eqref{eq:finnewform} in the regime $\tau\in \widetilde{\Omega}$ is exponentially small uniformly in $\xi\in \Gamma$ and $a\in r\Lambda$. Then we prove that the regime $a\in\widetilde{\Lambda}$ is also exponentially small for any $\xi\in \Gamma$ and $\tau \in \Omega\setminus \widetilde{\Omega}$. Finally, we conclude that also the contribution for $\xi\in \widetilde{\Gamma}$ is negligible.
    
    We start with the regime $\tau\in \widetilde{\Omega}$. Similarly to~\cite[Eq. (97)]{Cipolloni2020}, using that $|1+2a+a^2\tau|\gtrsim 1$, we have that
    \begin{equation}
        \label{eq:impb}
        \left|\oint_\Gamma G_N(a,\tau,\xi,z)\xi^2 e^{Nf(\xi)}\dif \xi\right|\lesssim N^3 \left(1+\frac{1}{|a|}+\frac{1}{|a|^2|\tau|}\right).
    \end{equation}
    
    Then, given, the lower bounds for $\Re g$ in~\ref{case1}--\ref{case12} by simple computations we conclude the following lemma.
    
    \begin{lemma}
        For any $\alpha,\gamma\in \R $ it holds
        \begin{equation}
            \label{eq:expsmall}
            \int_{\widetilde{\Omega}}|\dif\tau|\int_{r\Lambda}|\dif a|\, |a|^\alpha |\tau|^{-\gamma} e^{-\Re g(a,\tau,\eta,E)}\le N^{C(\alpha,\gamma)} e^{-N^{\omega/10}},
        \end{equation}
        for some $N$-independent constant $C(\alpha,\gamma)>0$.
    \end{lemma}
    
    Using the bound in~\eqref{eq:expsmall} we readily conclude that the contribution of the regime $\tau\in \widetilde{\Omega}$ is exponentially small and so negligible.
    
    We now consider the regime $a\in\widetilde{\Lambda}$. We split this regime into two cases: (i) $|a|\ge N^{-10}$, (ii) $|a|\le N^{-10}$. For $|a|\ge N^{-10}$, by~\eqref{eq:impb} and Lemma~\ref{lem:dec}, we readily conclude that
    \begin{equation}
        \label{eq:smallab}
        \int_{\Omega\setminus\widetilde{\Omega}}|\dif\tau|\int_{\widetilde{\Lambda}}|\dif a|\, |a|^\alpha |\tau|^{-\gamma} e^{-\Re g(a,\tau,\eta,E)}\le N^{C(\alpha,\gamma)} e^{-N^{1-2\omega}}.
    \end{equation}
    In the regime $|a|\le N^{-10}$ we conclude a bound as in~\eqref{eq:smallab} using the explicit form of $g$ in~\eqref{bosonphfAr} and that $|\tau|\le 1$. This proves that also the regime $a\in\widetilde{\Lambda}$ is negligible.
    
    Finally, the fact that the regime $\xi\in\widetilde{\Gamma}$ is exponentially small, given that both the regimes $\tau\in\widetilde{\Omega}$ and $a\in \widetilde{\Lambda}$ are removed, follows exactly as in the proof~\cite[Lemma 6.4]{Cipolloni2020}.
\end{proof}

\subsection{The regime where  \texorpdfstring{$|\xi|$}{|xi|},  \texorpdfstring{$|a\tau|$}{|a tau|} and
\texorpdfstring{$|\tau|$}{|tau|} are all large}\label{sec:large}
In the remainder of this section we focus on the regime when $|\xi|\ge N^\omega$, $|a|\ge N^{2\omega}|\tau|^{-1}$ and $|\tau|\le N^{-\omega}$, and in this regime we expand $f(\cdot,E), g(\cdot,\tau,\eta,E), G_N$ similarly to Eq. (75)-(77) of~\cite{Cipolloni2020}. By Taylor expansion for large $|\xi|$ and large $|a\tau|$ we have
\begin{equation}
    \label{eq:largregexpxi}
    \begin{split}
        f(\xi,E)&=\left[-E \xi+\frac{1}{2\xi^2}\right]\times \left(1+\mathcal{O}\big(|\xi|^{-1}\big)\right) \\
        g(a,\tau,\eta,E)&=\left[-Ea-\frac{2 \eta ^2 (\tau -1)}{\tau }+\frac{\delta}{a\tau}+ \frac{(2-\tau)}{2 a^2 \tau ^2}+\frac{2\eta^2(\tau-5)}{a^2\tau^2}\right]\times\left(1+\mathcal{O}\left(|a\tau|^{-1}\right)\right),
    \end{split}
\end{equation}
and\footnote{Note that the term $-\frac{N^2}{a^4\tau^2\xi^4}$ was erroneously missing in the expansion in~\cite[Eq. (76)]{Cipolloni2020}.}
\begin{align}
    G_{1,N}(a,\tau,\xi,|z|)&= \Bigg[\sum_{\substack{a,\beta\ge 2, \, \alpha+\beta=8,\\\gamma=\min\{\alpha-1,3\}}} \frac{c_{1,\alpha,\beta,\gamma}N^2}{a^\alpha \tau^\gamma  \xi^{\beta}}-\frac{N}{a^4\tau^2\xi^4}+\sum_{\substack{\alpha,\beta\ge 2, \, \alpha+\beta=7, \\ \gamma=\min\{\alpha-1,3\}}}\frac{c_{2,\alpha,\beta,\gamma}N^2\delta}{a^\alpha \tau^\gamma\xi^\beta} \nonumber\\\label{expfxi2AAr}
    &\quad+\sum_{\substack{a,\beta\ge 2,\, \alpha+\beta=6,\\ \gamma=\min\{\alpha-1,2\}}} \frac{c_{3,\alpha,\beta,\gamma}N}{a^\alpha \tau^\gamma  \xi^{\beta}} +\sum_{\substack{\alpha, \beta\ge 2, \alpha+\beta=6 \\ \gamma=\min\{\alpha-1,2\}}}\frac{c_{4,\alpha,\beta,\gamma} N^2\delta^2}{a^\alpha\tau^\gamma\xi^\beta}\Bigg]\times\big[1+\mathcal{O}\big(|a\tau|^{-1}+|\xi|^{-1}\big)\big], \\
    G_{2,N}(a,\tau,\xi,z)&=\Bigg[\sum_{\substack{\alpha,\beta\ge 2, \, \alpha+\beta=6, \\ \gamma=\max\{ \alpha-1,2\}}} \frac{4 N^2\eta^2}{a^\alpha\tau^\gamma\xi^{\beta}} +\sum_{\substack{\alpha,\beta\ge 2, \alpha+\beta=5, \\ \gamma=\max\{\alpha-1,2\}}}\frac{4 N^2\eta^2\delta}{a^\alpha\tau^\gamma\xi^\beta} \nonumber\\ \label{newexpansionA1Ar}
    &\quad + \sum_{\substack{\alpha,\beta=2,3,\, \alpha+\beta=5 \\ \gamma=\max\{\alpha-1,2\}}} \frac{4 N\eta^2}{a^\alpha\tau^\gamma\xi^{\beta}} \Bigg]\times\big[1+\mathcal{O}\big(|a\tau|^{-1}+|\xi|^{-1}\big)\big],
\end{align}
where $c_{i,\alpha,\beta,\gamma}\in \R $ are defined as in Appendix~\ref{app:expco}.

\subsection{Proof of Theorem~\ref{theo:1pointreal}}
We recall that we only prove the case $\widetilde{\delta}=0$; the case $\widetilde{\delta}\ge C_1$ is completely analogous and so omitted. By Lemma~\ref{lem:smallatau} and~\eqref{eq:finnewform} we conclude that
\begin{equation}
    \label{eq:nosmall}
    \E \Tr   [Y^z-w]^{-1}=\frac{N}{4\pi \ii} \oint_{\Gamma\setminus \widetilde{\Gamma}} \dif \xi \int_{\Omega\setminus \widetilde{\Omega}} \dif \tau \int_{r\Lambda\setminus \widetilde{\Lambda}} \dif a \frac{\xi^2a}{\tau^{1/2}} e^{N[f(\xi,w)-g(a,\tau,\eta,w)]} G_N(a,\tau,\xi,\eta)
\end{equation}
up to an exponentially small error that we will always ignore in the sequel. In order to compute the leading order of~\eqref{eq:nosmall} as $N$ goes to infinity, we use the change of variables
\begin{equation}
    \label{eq:cv}
    \widetilde{\eta}=N^{1/2}\eta, \quad \lambda=N^{3/2}E, \quad a'=a N^{-1/2}, \quad \xi'=\xi N^{-1/2},
\end{equation}
where $a', \xi'$ are the new integration variables. We get that (omitting the primes, i.e.\ 
using the notation $a$, $\xi$ for the new variables as well to make the notation simpler)
\begin{equation}
    \label{eq:nosmallafterchange}
    \begin{split}
        &\E \Tr   [Y^z-w]^{-1} \\
        &\quad=\frac{N^{3/2}}{4\pi \ii} \int_{N^{-1/2}(\Gamma\setminus \widetilde{\Gamma})} \dif \xi \int_{\Omega\setminus \widetilde{\Omega}} \dif \tau \int_{N^{-1/2}(r\Lambda\setminus \widetilde{\Lambda})} \dif a \frac{\xi^2a}{\tau^{1/2}} e^{N[\f(\xi,w)-\g(a,\tau,\eta,w)]} G(a,\tau,\xi,\widetilde{\eta})\\
        &\qquad+\mathcal{O}(N).
    \end{split}
\end{equation}
Here we used the asymptotic relations
\begin{equation}
    \label{eq:leadterm}
    \begin{split}
        f(\sqrt{N}\xi,\lambda)&=N^{-1}\f(\xi,\lambda)\left( 1+\mathcal{O}\left(\frac{1}{N^{1/2}|\xi|}\right)\right), \\
        g(\sqrt{N} a,\tau,\sqrt{N}\widetilde{\eta},\lambda)&=N^{-1}\g(a,\tau,\widetilde{\eta},\lambda)\left( 1+\mathcal{O}\left(\frac{1}{N^{1/2}|a\tau|}\right)\right),
    \end{split}
\end{equation}
with $\f$, $\g$, and $G$ defined in~\eqref{eq:explt}. The pre-factor $N^{3/2}$ in the leading term of~\eqref{eq:largenas} follows by a simple power counting: $a\sim N^{1/2}$, $\xi\sim N^{1/2}$, $\eta\sim N^{-1/2}$, the volume factor from the Jacobian of the change of variables~\eqref{eq:cv} gives a factor of $N$. In order to bound the error term in~\eqref{eq:leadterm} we also used the following lemma.

\begin{lemma}\label{lem:proteclem}
    Let $\f$ and $\g$ be the functions defined in~\eqref{eq:explt}. Then for any fixed  $\alpha, \beta,\gamma\in \R $ it holds
    \begin{equation}
        \label{eq:convdos}
        \int_{N^{-1/2}(\Gamma\setminus \widetilde{\Gamma})} \left|\dif \xi\right| \int_{\Omega\setminus \widetilde{\Omega}} \left|\dif \tau \right| \int_{N^{-1/2}(r\Lambda\setminus \widetilde{\Lambda})}  \left|\dif a\right| \left|\frac{1}{a^\alpha\tau^\gamma \xi^\beta} e^{\f(\xi,\lambda)-\g(a,\tau,\widetilde{\eta},\lambda)}\right|\le C,
    \end{equation}
    for some constant $C<\infty$ which depends only on $\alpha, \beta, \gamma$ and on
    the control parameters $C_0$, $C_1$ from Theorem~\ref{theo:1pointreal}.
\end{lemma}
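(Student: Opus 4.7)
The plan is to exploit the fact that $\f$ depends only on $\xi$, so the $\xi$-integral decouples from the $(a,\tau)$-integration, and then bound each piece separately using the explicit formulae in~\eqref{eq:explt}.

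First, for the $\xi$-integral: the rescaled contour $N^{-1/2}\Gamma$ is a circle of radius $\lambda^{-1/3}$ that approaches the origin at distance $O(N^{-1/2})$. Near the point of closest approach I would parametrize $\xi = \lambda^{-1/3}(1-e^{\ii\phi})$, so that $\xi\approx -\ii\phi\lambda^{-1/3}$ and $\Re[1/(2\xi^2)]\approx -\lambda^{2/3}/(2\phi^2)$. The resulting super-polynomial decay $e^{\Re\f}\sim \exp(-c\lambda^{2/3}/\phi^2)$ absorbs any polynomial singularity $|\xi|^{-\beta}\sim |\phi|^{-\beta}\lambda^{\beta/3}$; a change of variables $u=1/\phi^2$ reduces the local contribution to a convergent gamma-type integral. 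Away from the tangent point $|\xi|$ is bounded below and $\Re\f$ is bounded above, so the $\xi$-factor of the integrand is uniformly integrable, with bound $C(\beta,C_0)$ independent of $N$.

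Second, for the inner $a$-integral at fixed $\tau\in\Omega\setminus\widetilde\Omega$: on the initial segment of $\Lambda$ in direction $e^{\ii\pi/6}$, the key observation is that $a^2\propto e^{\ii\pi/3}$, so the dominant term $(2-\tau)/(2a^2\tau^2)$ has real part proportional to $\cos(\pi/3+2\arg\tau)$, which stays $\ge \cos(\pi/3)>0$ for the entire range $\arg\tau\in[-\pi/3,0]$ ensured by construction of $\Omega$; this gives strong exponential decay for small $|a|$. On the second segment in direction $e^{3\ii\pi/5}$, since $\cos(3\pi/5)<0$ one has $\Re(-\lambda a)>0$ growing linearly in $|a|$, yielding exponential decay at infinity. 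Together with the bound at the joining point $|a|\sim \lambda^{-1/3}|\tau|^{-2/3}$ (where $\Re\g \gtrsim \lambda^{2/3}|\tau|^{-2/3}$), the $a$-integral is absolutely convergent, and its value decays super-polynomially in $|\tau|^{-1}$ as $\tau\to 0$: either via $e^{-c\lambda^{2/3}|\tau|^{-2/3}}$ coming from the $(2-\tau)/(2a^2\tau^2)$ term at the bend of $\Lambda$ (for $\widetilde\eta=0$), or via $e^{-c\widetilde\eta^2/|\tau|}$ coming from the $a$-independent factor $\exp(-\Re[2\widetilde\eta^2(1-\tau)/\tau])$ (for $\widetilde\eta\ne 0$), using that $\Re(1/\tau)>0$ throughout $\Omega$.

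This super-polynomial decay in $|\tau|^{-1}$ dominates the polynomial $|\tau|^{-\gamma}$ for any $\gamma\in\R$, making the final $\tau$-integral convergent. Combining the three bounds yields the desired estimate uniform in $N,\lambda,\widetilde\eta$; the case $\widetilde\delta\ge C_1$ is essentially identical, the extra $\widetilde\delta/\xi$ and $\widetilde\delta/(a\tau)$ terms in $\f$ and $\g$ only strengthening the decay. The main technical obstacle is the intermediate $a$-regime near the bend of $\Lambda$, where the dominant term in $\g$ transitions from $(2-\tau)/(2a^2\tau^2)$ to $-\lambda a$; the scaling parameter $r$ in~\eqref{r def} is tailored precisely so that the two decay mechanisms meet at a common threshold. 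Concretely, this amounts to transferring the quantitative lower bounds on $\Re g$ of Lemma~\ref{lemma lower bound g} through the change of variables $a=N^{1/2}a'$, verifying that the implicit constants remain $N$-independent — after which the bound is a routine case analysis matching the $\widetilde\delta=0$ and $\widetilde\delta\ge C_1$ cases.
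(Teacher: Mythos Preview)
Your proposal is correct and follows the same strategy as the paper's one-line proof, which merely asserts that \(\Re\f<0\) and \(\Re\g>0\) on the chosen contours and that this, together with the explicit forms~\eqref{eq:explt}, yields~\eqref{eq:convdos}. You have supplied a considerably more detailed verification of precisely these sign and decay conditions, segment by segment; the only minor remark is that your closing appeal to Lemma~\ref{lemma lower bound g} is slightly off-target (that lemma treats the small-\(\abs{\tau}\) regime, not \(\tau\in\Omega\setminus\widetilde\Omega\)), but your preceding direct analysis of \(\g\) on \(\Lambda\) already suffices without it.
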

\begin{proof}
    The bound in~\eqref{eq:convdos} directly follows from the explicit form of $\f$ and $\g$ in~\eqref{eq:explt} and by the fact that on the chosen contours $\Gamma$, $\Omega$, $r\Lambda$ it holds $\Re \g>0$, $\Re \f <0$.
\end{proof}

Using Lemma~\ref{lem:smallatau} once more, we can add back the regimes $\xi\in N^{-1/2}\widetilde{\Gamma}$, $\tau\in \widetilde{\Omega}$, $a\in N^{-1/2}\widetilde{\Lambda}$ to~\eqref{eq:leadterm}. Hence, using that we can deform the integration contours by holomorphicity, we conclude~\eqref{eq:largenas}--\eqref{eq:exp1poi}. The absolute convergence of $I^{(\R )}(\lambda,\widetilde{\eta})$ follows from Lemma~\ref{lem:proteclem}.

\subsection{The limit \texorpdfstring{$|\widetilde{\eta}|\to +\infty$}{|eta|->infty}.}\label{sec:tildlim}
The main goal of this section is to study the asymptotic of $I^{(\R )}(\lambda,\widetilde{\eta},\widetilde{\delta})$, defined in~\eqref{eq:exp1poi}, in the limit $|\widetilde{\eta}|\to +\infty$; in particular we prove that $I^{(\R )}(\lambda,\widetilde{\eta},\widetilde{\delta})$ converges to the $1$-point function of the shifted complex Ginibre ensemble $I^{(\C )}(\lambda,\widetilde{\delta})$, which is defined in~\eqref{eq:IC}. To make the presentation clearer, also in this case we present the proof only for the case $\widetilde{\delta}=0$ and denote $I^{(\R )}(\lambda,\widetilde{\eta}):=I^{(\R )}(\lambda,\widetilde{\eta},\widetilde{\delta}=0)$.

We recall that by Theorem~\ref{theo:1pointreal} we have
\begin{equation}
    \label{eq:ataunewform}
    I^{(\R )}(\lambda,\widetilde{\eta})=\frac{1}{4\pi \ii} \oint_{\Gamma} \dif \xi \int_\Omega \dif \tau \int_{\Lambda} \dif a \frac{\xi^2a}{\tau^{1/2}} e^{\f(\xi,\lambda)-\f(a,\lambda)} e^{\g(a,1,\widetilde{\eta},\lambda)-\g(a,\tau,\widetilde{\eta},\lambda)} G(a,\tau,\xi,\widetilde{\eta}),
\end{equation}
with $\Gamma$, $\Omega$, $\Lambda$ from Theorem~\ref{theo:1pointreal}, where we used that $\g(a,1,\widetilde{\eta},\lambda)=\f(a,\lambda)$ for any $a\in\C $.

\begin{proof}[Proof of Corollary~\ref{theo:1realcompl}]
    In this proof we use the notation
    \[
    \widetilde{\Omega}:=\{\tau\in\Omega : |\tau|\le C|\widetilde{\eta}|^{-1/2}\}, \quad \widetilde{\Lambda}:=\{a\in\Lambda : |a|\le |\widetilde{\eta}|^{-1/2}\},
    \]
    for some large constant $C>0$ (note that  $\widetilde{\Omega}$, $\widetilde{\Lambda}$ have already been used in~\eqref{eq:smallreg} to denote different segments). Then, similarly to the proof of Lemma~\ref{lem:smallatau}, it is easy to see that the integral in the regime when either $\tau\in\widetilde{\Omega}$ or $a\in\widetilde{\Lambda}$ is bounded by $e^{-c|\widetilde{\eta}|^{1/4}}$, for some small fixed $c>0$. In particular, by~\eqref{eq:ataunewform} we get that
    \begin{equation}
        \label{eq:interm}
        \begin{split}
            I^{(\R )}(\lambda,\widetilde{\eta})&=\frac{1}{4\pi \ii} \oint_{\Gamma} \dif \xi  \int_{\Lambda\setminus \widetilde{\Lambda}} \dif a\int_{\Omega\setminus \widetilde{\Omega}} \dif \tau \frac{\xi^2a}{\tau^{1/2}} e^{f(\xi,\lambda)-f(a,\lambda)} e^{\g(a,1,\widetilde{\eta},\lambda)-\g(a,\tau,\widetilde{\eta},\lambda)} G(a,\tau,\xi,\widetilde{\eta}) \\
            &\quad+\mathcal{O}\left(e^{-c|\widetilde{\eta}|^{1/4}}\right).
        \end{split}
    \end{equation}
    Note that by the definition of $G$ in~\eqref{eq:explt} the $\xi$-integral and the $(a,\tau)$-integral factorise, hence from now on we will consider only the $(a,\tau)$-integral.
    
    Then, to prove~\eqref{eq:largeetalim}, in the following lemma, whose proof is postponed to the end of this section, we compute the leading order term of the $\tau$-integral in~\eqref{eq:interm}. 
    
    \begin{lemma}\label{lem:sectauint}
        For any large constant $C_0>0$, and for any fix $\gamma\in \R $, $C_0^{-1}\le\lambda\le C_0$ it holds
        \begin{equation}
            \label{eq:tatata}
            \int_{\Omega\setminus\widetilde{\Omega}} \tau^{-\gamma} e^{\g(a,1,\widetilde{\eta},\lambda)-\g(a,\tau,\widetilde{\eta},\lambda)} \, \dif \tau=\frac{1}{2\widetilde{\eta}^2}+\mathcal{O}\left(|\widetilde{\eta}|^{-3}\right),
        \end{equation}
        uniformly in $a\in \Lambda\setminus \widetilde{\Lambda}$. The implicit constant in $\mathcal{O}(\cdot)$ depends on $C_0$.
    \end{lemma}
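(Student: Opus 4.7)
My plan is a localization-and-rescaling analysis near the endpoint $\tau=1$ of the contour $\Omega$, where the phase difference vanishes. Using the algebraic identity
\[
\g(a,1,\widetilde{\eta},\lambda)-\g(a,\tau,\widetilde{\eta},\lambda) = -(1-\tau)\Bigl[\frac{2\widetilde{\eta}^2}{\tau}+\frac{\tau+2}{2a^2\tau^2}\Bigr],
\]
which follows directly from~\eqref{eq:explt} with $\widetilde{\delta}=0$, one sees that the phase is $O(|1-\tau|)$ near $1$ and that its real part is dominated by $-2\widetilde{\eta}^2\Re[(1-\tau)/\tau]$ for large $\widetilde{\eta}$.

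First I would establish strong decay away from $\tau=1$: direct inspection on the three segments of $\Omega$ shows that $\Re[(1-\tau)/\tau]\ge c>0$ on the left and middle segments (after the removal of $\widetilde{\Omega}$), while on the rightmost segment, where $1-\tau$ runs in the direction $e^{\ii\pi/3}$, one has $\Re[(1-\tau)/\tau]\sim|1-\tau|/2$. Consequently, the contribution of $\set{\tau\in\Omega\setminus\widetilde{\Omega}\,:\,|1-\tau|\ge \widetilde{\eta}^{-1+\delta}}$, for any small $\delta>0$, is exponentially small in $\widetilde{\eta}^{\delta}$ and hence negligible compared with the claimed error $\widetilde{\eta}^{-3}$.

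On the remaining window $|1-\tau|\le\widetilde{\eta}^{-1+\delta}$, I would change variables $w:=2\widetilde{\eta}^2(1-\tau)$, so that $\dif\tau=-\dif w/(2\widetilde{\eta}^2)$ and the $w$-contour runs from $0$ outwards along the ray $e^{\ii\pi/3}\R_+$. By holomorphicity this contour can be deformed onto the positive real axis, since $\cos(\arg w)\ge 1/2$ throughout the intermediate sector $\arg w\in[0,\pi/3]$ provides uniform Gaussian-type decay of $e^{-w}$ and of all polynomial prefactors. In the new variable Taylor expansion yields
\[
\g(a,1)-\g(a,\tau)=-w-\frac{3w}{4a^2\widetilde{\eta}^2}+O\!\Bigl(\frac{w^2}{\widetilde{\eta}^2}+\frac{w^2}{a^2\widetilde{\eta}^2}\Bigr),\qquad \tau^{-\gamma}=1+O\!\bigl(w/\widetilde{\eta}^2\bigr).
\]
The leading term integrates to $\frac{1}{2\widetilde{\eta}^2}\int_0^\infty e^{-w}\dif w=\frac{1}{2\widetilde{\eta}^2}$, and the subleading terms, evaluated via $\int_0^\infty w^k e^{-w}\dif w=k!$, produce corrections of size $O\!\bigl(1/(a^2\widetilde{\eta}^4)+1/\widetilde{\eta}^4\bigr)$. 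Since $|a|\ge|\widetilde{\eta}|^{-1/2}$ for $a\in\Lambda\setminus\widetilde{\Lambda}$, these are $O(|\widetilde{\eta}|^{-3})$ uniformly.

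The main obstacle is uniformity of all the estimates in $a$ as it ranges over the unbounded contour $\Lambda$. For $|a|$ large the correction in the phase shrinks and the asymptotic expansion only improves, so the dangerous regime is the inner boundary $|a|\sim|\widetilde{\eta}|^{-1/2}$ of $\widetilde{\Lambda}$, where precisely the bound $1/(a^2\widetilde{\eta}^4)\sim\widetilde{\eta}^{-3}$ is saturated and fixes the error in~\eqref{eq:tatata}.
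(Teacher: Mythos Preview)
Your proposal is correct and follows essentially the same approach as the paper: both use the identity for $\g(a,1)-\g(a,\tau)$, localize the $\tau$-integral to a shrinking window around $\tau=1$ (the paper uses $|1-\tau|\lesssim\widetilde{\eta}^{-3/2}$, you use $|1-\tau|\le\widetilde{\eta}^{-1+\delta}$), rescale by $2\widetilde{\eta}^2(1-\tau)$, and Taylor-expand using $|a|^{-2}\le\widetilde{\eta}$ to obtain the $O(|\widetilde{\eta}|^{-3})$ error. The only cosmetic difference is that you deform the rescaled contour from the ray $e^{\ii\pi/3}\R_+$ to $\R_+$ before integrating, whereas the paper integrates directly along the ray.
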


    Next, using~\eqref{eq:interm} and Lemma~\ref{lem:sectauint}, we conclude the proof of Corollary~\ref{theo:1realcompl}. First of all we notice that the leading term in~\eqref{eq:tatata} does not depend on $\gamma$, hence after performing the $\tau$-integration the power of $\tau$ that appears in $G(a,\tau,\xi,\widetilde{\eta})$ does not matter. For this reason after the $\tau$-integration we consider $G(a,1,\xi,\widetilde{\eta})$, i.e.\ for convenience we evaluate $G$ at $\tau=1$. More precisely, by Lemma~\ref{lem:sectauint} it follows that
    \begin{equation}
        \label{eq:intafttau}
        \int_{\Omega\setminus\widetilde{\Omega}}\frac{1}{\tau^{1/2}} e^{-\g(a,\tau,\widetilde{\eta},\lambda)}G(a,\tau,\xi,\widetilde{\eta})\, \dif \tau= e^{-\g(a,1,\widetilde{\eta},\lambda)}G(a,1,\xi,\widetilde{\eta})\left(\frac{1}{2\widetilde{\eta}^2}+\mathcal{O}\left(\frac{1}{|\widetilde{\eta}|^3}\right)\right).
    \end{equation}
    Then, by~\eqref{eq:interm} together with~\eqref{eq:intafttau}, it follows that
    \begin{equation}
        \label{eq:finetaco}
        \begin{split}
            I^{(\R )}(\lambda,\widetilde{\eta})&=\frac{1}{8\pi \ii} \oint_{\Gamma} \dif \xi  \int_{\Lambda\setminus\widetilde{\Lambda}} \dif a \frac{\xi^2a}{\widetilde{\eta}^2} e^{\f(\xi,\lambda)-\f(a,\lambda)} G(a,1,\xi,\widetilde{\eta})+\mathcal{O}\left(|\widetilde{\eta}|^{-1}\right) \\
            &= \frac{1}{2\pi \ii} \oint_{\Gamma} \dif \xi  \int_{\Lambda} \dif a  \left(\frac{1}{a\xi^2}+\frac{1}{a^2\xi}+\frac{1}{a^3}\right)e^{\f(\xi,\lambda)-\f(a,\lambda)}+\mathcal{O}\left(|\widetilde{\eta}|^{-1}\right),
        \end{split}
    \end{equation}
    where in the second equality we used the explicit form of $G$ from~\eqref{eq:explt}, and that we can add back the regime $a\in\widetilde{\Lambda}$ at the price of a negligible error. This concludes the proof of~\eqref{eq:largeetalim}.
\end{proof}

We now present the proof of Lemma~\ref{lem:sectauint}.

\begin{proof}[Proof of Lemma~\ref{lem:sectauint}]
    From now on we assume that $|\widetilde{\eta}|\ge C$, for some large constant $C>0$, since we are interested in the asymptotics for $|\widetilde{\eta}|\to +\infty$. Additionally, since
    \begin{equation}
        \label{eq:diffgs}
        \g(a,1,\widetilde{\eta},\lambda)-\g(a,\tau,\widetilde{\eta},\lambda)=-\frac{2\widetilde{\eta}^2(1-\tau)}{\tau}+\frac{\tau^2+\tau-2}{2a^2\tau^2}
    \end{equation}
    depends only on $\widetilde{\eta}^2$, without loss of generality we assume that $\widetilde{\eta}>0$.
    
    Next we split the $\tau$-integral in~\eqref{eq:tatata} into two parts: $|\tau|\in [0,1-\widetilde{\eta}^{-3/2})$ and $|\tau|\in [1-\widetilde{\eta}^{-3/2},1]$, which we denote by $\Omega_1$ and $\Omega_2$, respectively. It is easy to see that
    \begin{equation}
        \label{eq:tau012}
        \left|\int_{\Omega_1\setminus \widetilde{\Omega}} \tau^{-\gamma} e^{\g(a,1,\widetilde{\eta},\lambda)-\g(a,\tau,\widetilde{\eta},\lambda)}\, \dif \tau\right|\lesssim \int_{\Omega_1\setminus \widetilde{\Omega}} \frac{e^{-c\widetilde{\eta}^{1/2}\Re[\tau^{-1}]}}{|\tau|^{\gamma}}\, d\tau\lesssim e^{-c\widetilde{\eta}^{1/2}},
    \end{equation}
    for some small fixed $c>0$, where we used that by~\eqref{eq:diffgs} we have
    \[
    \Re[\g(a,1,\widetilde{\eta},\lambda)-\g(a,\tau,\widetilde{\eta},\lambda)]=-\Re\left[\frac{2\widetilde{\eta}^2(1-\tau)}{\tau}\left(1+\frac{2+\tau}{4a^2\tau\widetilde{\eta}^2}\right)\right]\le -c\Re\left[\frac{\widetilde{\eta}^{1/2}}{\tau}\right],
    \]
    for any $a\in \Lambda\setminus\widetilde{\Lambda}$ and $\tau\in \Omega_1\setminus \widetilde{\Omega}$. Hence, in order to conclude the proof, we are left only with the regime $\tau\in \Omega_2$ and $a\in \Lambda\setminus\widetilde{\Lambda}$.

    Define $t(\tau):= -e^{4\ii\pi/3}2\widetilde{\eta}^2(1-\tau)$, hence $\tau=\tau(t)=1+te^{-4\ii \pi/3}/(2\widetilde{\eta}^2)$, then we have that
    \[
    g(a,1,\widetilde{\eta},\lambda)-g(a,\tau(t),\widetilde{\eta},\lambda)=te^{-4\ii\pi/3}+\mathcal{O}\left(\frac{t}{\widetilde{\eta}^2}+\frac{t}{|a|^2\widetilde{\eta}^2}\right),
    \]
    and so that
    \begin{equation}
        \label{eq:tau121}
        \begin{split}
            \int_{\Omega_2} e^{g(a,1,\widetilde{\eta},\lambda)-g(a,\tau,\widetilde{\eta},\lambda)}\, \dif \tau &=\frac{e^{-4\ii\pi/3}}{2\widetilde{\eta}^2}\int_0^{t(\tau_0)} e^{t e^{-4\ii\pi/3}}\left[1+\mathcal{O}\left(\frac{t}{\widetilde{\eta}^2}+\frac{t}{|a|^2\widetilde{\eta}^2}\right)\right] \, \dif t \\
            &=\frac{1}{2\widetilde{\eta}^2}+\mathcal{O}\left(\frac{1}{|\widetilde{\eta}|^3}\right),
        \end{split}
    \end{equation}
    where $\tau_0:= \{|\tau|=1-\widetilde{\eta}^{-3/2}\}\cap \Omega$, and in the last equality we used that $|a|^{-2}\le \widetilde{\eta}$. Combining~\eqref{eq:tau012}-\eqref{eq:tau121} we conclude~\eqref{eq:tatata}.
\end{proof}

\appendix

\section{Additional technical results}\label{app:addlemm}
\begin{proof}[Proof of Lemma~\ref{lemma lower bound g}]
    The proofs of all lower bounds are similar, hence we will not provide details for all of them. For definiteness we will prove  cases~\ref{case1},~\ref{case2} and~\ref{case2b} since those already demonstrate the qualitatively different \(\abs{a\tau}\ll 1\), \(\abs{a\tau} \sim1\) and \(\abs{a\tau}\gg1\) regimes. 
    
    \begin{proof}[Proof of~\ref{case1}-\ref{case2}]
        First consider the \(\abs{a}\ll \abs{\tau}^{-1}\) case (note that this relation
        is necessarily fulfilled if \(\abs{\tau}\ll E\) 
        since then \(\abs{a}\lesssim E^{-1}\ll \abs{\tau}^{-1}\)).
        We have to prove \(\Re g\gtrsim -\log\abs{a\tau}\) which follows from
        \[\Re g \approx  - E\Re a+ \Re \Bigl[\frac{1}{2}\log \frac{1+2a}{a^2\tau}-\frac{1+a}{1+2a}\Bigr]
        \gtrsim  - E\Re a+ \frac{1}{2}\log \Big| \frac{1+2a}{a}\Big| + \frac{1}{2}\log\frac{1}{\abs{a\tau}}- 1\gtrsim -\log\abs{a\tau}.
        \]
        Thus we are only left with the \(\abs{a}^{-1}\sim\abs{\tau}\sim E\) case in which we introduce
        the parametrization \(\tau=t e^{-\ii\pi/3} E\) with  a real  \(t\sim 1\), so that 
        \[r=\frac{1\wedge t^{-2/3}}{2E} \Bigl(1+\landauO{E^{2/3}}\Bigr).\] 
        
        For \(\abs{a}\le r\) we have \(\arg(a)=\pi/6\) and parametrise \(a\tau=s e^{-\ii\pi/6}\) with 
        \begin{equation}\label{srange}
            s\in\Bigl[0,\frac{t\wedge t^{1/3}}{2}\frac{\sin(4\pi/15)}{\sin(17\pi/30)}\Bigr].
        \end{equation} 
        Using that $Ea = st^{-1} e^{\ii\pi/6}$ and that $|a|\gg 1$, the claim is thus equivalent to showing 
        \[
        \begin{split}
            &\Re g \approx -\frac{s}{t}\Re e^{\ii\pi/6}  + \Re\Bigl[ \frac{1}{2} \log\frac{2 + s e^{-\ii \pi/6} }{s e^{-\ii \pi/6}}
            -\frac{1}{2+s e^{-\ii \pi/6}}\Bigr]\\
            &= \frac{1}{52} \Bigl(-28 + 6 \sqrt{3} - \frac{26 \sqrt{3} s}{t} + 
            13 \log\Bigl( \frac{4}{s^2} + \frac{2\sqrt{3}}{s} + 1\Bigr)\Bigr) \gtrsim 1 + (\log s^{-1})_+
        \end{split} \]
        where the last inequality is valid for \(s\) as in~\eqref{srange} and any \(t\le 100\).
        
        We turn to the case \(\abs{a}>r\), i.e.\ to the second segment of the contour \(r\Lambda\) where we parametrise \[a=E^{-1}\biggl(\frac{1\wedge t^{-2/3}}{2} \frac{\sin(4\pi/15)}{\sin(17\pi/30)} e^{\ii\pi/6}+s e^{3\ii\pi/5}\biggr)=:E^{-1}\wt a.\]
        with \(s\in [0,\infty)\). We express \(\Re g\) in terms of $\wt a = \wt a(s)$ and differentiate it as a function of $s$ we see that that function
        \[s\mapsto \Re\biggl[ -\wt a - \frac{1}{2+\wt a t e^{-\ii\pi/3}} + \frac{1}{2} \log \Bigl( 1 + \frac{2e^{\ii\pi/3}}{\wt a  t}\Bigr)\biggr]\] has a local minimum of size \(\sim t^{-2/3}\) at \(s\sim t^{-2/3}\) and thus for \(t\lesssim 1\) we obtain 
        \(\Re g\gtrsim 1+\log(s^{-1})_+\) also in this final case, completing the proof. 
    \end{proof}
    \begin{proof}[Proof of~\ref{case2b}] 
        For \(\abs{a}\gg E^{-1}\sim r\) it follows that \(-E\Re a\sim E\abs{a}\) by the choice of contour $\Lambda$.
        Therefore it is sufficient to prove  
        \begin{equation}\label{large a claim}
            \Re g=  - E\Re a+  \Re \Bigl[\frac{1}{2}\log \frac{1+2a+a^2\tau}{a^2\tau}-\frac{1+a}{1+2a+a^2\tau}\Bigr] 
            \gtrsim  E\abs{a}.
        \end{equation}
        If \(\abs{a}\gtrsim \abs{\tau}^{-1}\) then we estimate  
        \[    
        \begin{split}
            \abs{\Re \Bigl[\frac{1}{2}\log \frac{1+2a+a^2\tau}{a^2\tau}-\frac{1+a}{1+2a+a^2\tau}\Bigr]}  &\approx  \abs{\Re \Bigl[\frac{1}{2}\log \frac{2+a\tau}{a\tau}-\frac{1}{2+a\tau}\Bigr]}  \\
            &\lesssim \abs{\Re\Big[\frac{1}{a^2\tau^2}\Bigr]}\sim \frac{1}{\abs{a\tau}^2}  \ll   E\abs{a},
        \end{split}\]
        where  in the first step we used that $|a|\gg 1$ and 
        in the last step we used \(\abs{a}\gg\abs{\tau}^{-1}\gtrsim \abs{\tau}^{-2/3}E^{-1/3}\), confirming~\eqref{large a claim}. On the other hand, if \(\abs{a}\ll\abs{\tau}^{-1}\) (but still $|a|\gg 1$) then 
        by Taylor expansion  in $|a\tau|\gg 1$ we obtain 
        \[\Re \Bigl[\frac{1}{2}\log \frac{1+2a+a^2\tau}{a^2\tau}-\frac{1+a}{1+2a+a^2\tau}\Bigr] \gtrsim \Re\frac{1}{2}\log \frac{1}{a\tau} >0,\]
        trivially confirming~\eqref{large a claim}. 
    \end{proof}
    \noindent The remaining cases~\ref{case3}--\ref{case12} may be estimated by similar elementary considerations. 
\end{proof}

\begin{proof}[Proof of Lemma~\ref{lem:dec}]
    The first assertion follows from elementary calculations resulting in 
    \begin{equation}
        \frac{\dif \Re g(x e^{\ii\pi/6},\tau,0,E)}{\dif x} \lesssim - \Re \frac{1}{x e^{\ii\pi/6}}<0.
    \end{equation}
    and the second assertion from 
    \begin{equation}
        \Re\Bigl[\frac{x^2 e^{\ii\pi/3} (1-\tau)}{1+2x e^{\ii\pi/6}+x^2 e^{\ii\pi/3}\tau}\Bigr]\ge 0
    \end{equation}
    using the definition of the \(\tau\)-contour \(\Omega\).
\end{proof}

\section{Lists of coefficients}
\subsection{Explicit coefficients for the real 1-point function integral representation}\label{app:expco}
Here we collect the explicit coefficients in~\eqref{expfxi2AAr}:
\[
\begin{split}
    c_{1,2,6,1}&=c_{1,6,2,3}=1, \quad c_{1,3,5,2}=c_{1,5,3,3}=2, \quad c_{1,4,4,3}=4, \\
    c_{2,2,5,1}&=c_{2,5,2,3}=2, \quad c_{2,3,4,2}=c_{2,4,3,3}=4, \\
    c_{3,2,4,1}&=c_{3,4,2,2}=1, \quad c_{3,3,3,2}=2, \\
    c_{4,2,4,1}&=c_{4,4,2,2}=1, \quad c_{4,3,3,2}=2.
\end{split}
\]

\subsection{Explicit formulas for the real symmetric integral representation}\label{appendix poly}
Here we collect the explicit formulas for the polynomials of \(a,\xi,\tau\) in the definition of \(G_N\) in~\eqref{eq:newbetG}.
\[ \begin{split}
    p_{2,0,0}&:= a^4 \tau ^2+2 a^3 \xi  \tau +4 a^3 \tau -a^2 \xi ^2 \tau +4 a^2 \xi ^2+8 a^2 \xi +2 a^2 \tau \\
    &\qquad +4 a^2+2 a \xi ^3+8 a \xi ^2+10 a \xi +4 a+\xi ^4+4 \xi ^3+6 \xi ^2+4 \xi +1,   \\
    p_{1,0,0}&:= - a^4 \xi \tau ^2+a^4 \tau ^2-2 a^3 \xi ^2 \tau -2 a^3 \xi  \tau +4 a^3 \tau -a^2 \xi ^3 \tau -3 a^2 \xi ^2 \tau \\
    &\qquad -2 a^2 \xi  \tau +4 a^2 \xi +2 a^2 \tau +4 a^2+2 a \xi ^2+6 a \xi +4 a+\xi ^3+3 \xi ^2+3 \xi +1,\\
    p_{2,2,0}&:= 4 (a+1) \left(a^2 \tau +a \xi  \tau +2 a \tau +\xi ^2+2 \xi +1\right), \\
    p_{1,2,0}&:= 4 (a+1)  \left(a^2 \tau +a \xi  \tau +2 a \tau +\xi +1\right), \\
    p_{2,0,1}&:= 2  \bigl(a^3 \tau ^2+2 a^2 \xi  \tau +4 a^2 \tau +2 a \xi ^2+2 a \xi  \tau  \\
    &\qquad\qquad+4 a \xi +3 a \tau +2 a+\xi ^3+4 \xi ^2+5 \xi +2\bigr)\\
    p_{1,0,1}&:= 2 \bigl(a^3 \tau ^2+2 a^2 \xi  \tau +4 a^2 \tau +a \xi ^2 \tau +3 a \xi  \tau \\
    &\qquad\qquad+2 a \xi +3 a \tau  +2 a+\xi ^2+3 \xi +2\bigr), \\
    p_{2,2,1}&:= 4 (a+1)  (a+\xi +2), \\
    p_{2,0,2}&:= a^2 \tau +2 a \xi +4 a+\xi ^2+4 \xi +4. 
\end{split} \]

\printbibliography%
\end{document}